	\newtheorem{prop}{Proposition}
\address{}
\abstract{
	This paper starts by defining the criteria where the early-exercise  of an American option is never optimal, under positive, or negative rates. It follows with a short analysis of the various shapes of the exercise region  under negative interest rates. It then presents a new integral equation,  which establishes the option price, and the two early exercise boundaries, under negative rates. It shows how to solve this new equation,  through modifications of the modern and efficient algorithm of Andersen and Lake, from the initial guess of the two boundaries to more subtle changes required in their fixed point method for stability. Finally, the performance and accuracy of the resulting algorithm is assessed against a cutting edge finite difference method implementation.
}
\begin{document}
	
\section{Introduction}
Most of the existing literature on the valuation of American options implicitly assumes that the interest rate is always positive. For example, the landmark paper of \citet{barone1987efficient}, which gives an approximation for the price of American options under the Black-Scholes model, as well as an estimate of the exercise boundary states: \begin{quote}If $q \leq 0$, as in the case of an option on a non-dividend-paying stock (i.e. $q=0$), the  lower price bound of the European option will have a greater value than the exercisable proceeds of the American option for all levels of the commodity price, so there is no possibility of early exercise and the European option model will accurately price American call options. For the American puts, there is always some possibility of early exercise, so the European formula never applies.\end{quote}
Here, $q$ represents the dividend yield for a stock (along with the borrow spread), the foreign interest rate for a foreign exchange, or the convenience yield for a commodity. Let $r$ be the domestic interest rate, the above quote is true for $r \geq 0$, and becomes false for $r<0$. 

\citet[p. 7]{blokland2017american} attempts to analyze the situation for positive and negative interest rates and suggests that it is never optimal to exercise an American call option when the interest rates exceed the dividend yield. This is, in fact, only true for negative rates $r < 0$. In Section \ref{sec:optimal_exercise}, we identify the region $\mathcal{D}_{\textsf{call}}$ where it is never optimal to early-exercise an American call option to 
$\mathcal{D}_{\textsf{call}} =  \left\{(r,q)\in \mathbb{R}^2 | q < 0\,, q < r \right\}$.
Through the put-call symmetry relation of \citet{mcdonald1998parity}, the same reasoning for an American call under a negative dividend yield, may be applied for an American put under negative interest rates. In particular, the region where it is never optimal to early-exercise an American put option is not empty under negative interest rates.



The paper of \citet{barone1987efficient} is the foundation for more precise approximations such as the one from \citet{ju1999approximate}, as well as of more refined estimates of the exercise boundaries \citep{li2010analytical}, the latter being a key ingredient of the modern numerical techniques to price American options \citep{andersen2016high}.

Under positive rates and continuous dividend yield assumptions, it is well known that there is a single, continuous early-exercise boundary. This is not true anymore under negative rates. \citet{battauz2015real} have shown that there are two exercise boundaries for a given American option when the interest rate and dividend yield are located in a specific domain.

In Section  \ref{sec:boundary_neg_rate}, we take a look at the various shapes of the exercise region in the general case of negative interest rates. We then adjust the algorithms of  \citet{li2010analytical} to approximate the two boundaries, and examine its accuracy on various examples. We show that the approximation of \citet{ju1999approximate} for the American option price, based on the estimate of \citet{li2010analytical}, may break down under negative rates. 
Finally, we adapt the modern  numerical technique of \citet{andersen2016high} to price American options under negative rates via a new integral equation, similar to the one of \citet{kim1990analytic}, which is only valid for positive interest rates. To our knowledge, this new equation has not been published previously.
The techniques we present are also applicable to older, more traditional algorithms for the Kim integral equation, such as the ones of \citet{ju1998pricing,aitsahlia2001exercise,kallast2003pricing}.

\section{When is it never optimal to exercise an American option?}\label{sec:optimal_exercise}
If we consider $q=0$ and $r < 0$, it is easy to find concrete counter-examples to the proposition of \citet{barone1987efficient}. In particular, when we consider a low volatility, we find that European call option prices may be smaller than the intrinsic value.

\begin{prop}
	It is never optimal to early-exercise an American call option when the interest rate $r$ and dividend yield $q$ are in the region
	\begin{equation} \mathcal{D}_{\textsf{call}} =  \left\{(r,q)\in \mathbb{R}^2 | q \leq 0\,, q \leq r \right\} \,.\end{equation}
\end{prop}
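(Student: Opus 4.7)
The plan is to invoke the classical criterion: early exercise is never optimal as soon as the European call value dominates the intrinsic payoff $(S - K)^+$ everywhere on the state--time domain. Indeed, since the discounted European price is a martingale under the risk-neutral measure, optional sampling gives, for every stopping time $\tau \leq T$, the bound $\mathbb{E}[e^{-r\tau}(S_\tau - K)^+] \leq \mathbb{E}[e^{-r\tau} C_E(S_\tau, \tau)] = C_E(S_0, 0)$, so the American value coincides with the European one and the exercise region is empty. The entire task therefore reduces to showing $C_E(S, t) \geq (S - K)^+$ for all $S \geq 0$ and $t \in [0, T]$ under the hypotheses $q \leq 0$ and $q \leq r$.

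The first step is to recall the model-independent no-arbitrage lower bound
\begin{equation*}
 C_E(S, t) \;\geq\; S\,e^{-q(T-t)} \;-\; K\,e^{-r(T-t)},
\end{equation*}
obtained by comparing the terminal call payoff with a portfolio long $e^{-q\tau}$ units of the asset and short $K e^{-r\tau}$ zero-coupon bonds, where $\tau := T - t \geq 0$. The claim then reduces to verifying
\begin{equation*}
 S\bigl(e^{-q\tau} - 1\bigr) \;\geq\; K\bigl(e^{-r\tau} - 1\bigr).
\end{equation*}
For $S < K$, the inequality $C_E \geq 0 = (S-K)^+$ is immediate, so only $S \geq K \geq 0$ is at issue. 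In that regime, $q \leq 0$ makes $e^{-q\tau} - 1 \geq 0$ and licenses the replacement of the coefficient $S$ by the smaller $K$ on the left side; then $q \leq r$ yields $e^{-q\tau} \geq e^{-r\tau}$, which closes the chain.

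The argument poses no real analytic obstacle; what deserves attention is articulating why \emph{both} hypotheses are needed. The sign condition $q \leq 0$ is what allows the coefficient $S$ to be absorbed by $K$, while $q \leq r$ fixes the direction of the exponential comparison. Dropping either one leaves room for the explicit low-volatility counter-example alluded to just before the statement, where $q = 0 > r$ so European calls fall below their intrinsic. Because the hypotheses on $(r,q)$ are time-independent, the dominance $C_E \geq (S-K)^+$ holds uniformly in $t \in [0, T]$, so the American call inherits the European value throughout, completing the proof.
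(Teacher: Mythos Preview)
Your argument is correct and rests on the same core inequality as the paper, namely that
\[
S e^{-q\tau} - K e^{-r\tau} \;\geq\; S - K \qquad (S \geq K,\ \tau \geq 0)
\]
under the hypotheses $q \leq 0$ and $q \leq r$. The execution differs in two minor but pleasant ways. First, the paper reaches this lower bound by taking the zero-volatility European price and then invoking monotonicity of the Black--Scholes price in $\sigma$; you instead cite the model-free no-arbitrage bound $C_E \geq S e^{-q\tau} - K e^{-r\tau}$ directly, which spares the extra step. Second, the paper verifies the displayed inequality by splitting on the sign of $r$ (factoring out $e^{-r\tau}$ when $r<0$), whereas you handle both signs at once by absorbing $S$ into $K$ via $q \leq 0$ and then comparing exponentials via $q \leq r$; the paper itself remarks that a more direct factoring (of $e^{-q\tau}$) is possible, and your route is in that spirit. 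Finally, you make explicit the optional-sampling link from ``$C_E$ dominates intrinsic'' to ``American $=$ European'', which the paper leaves implicit.
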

\begin{proof}
Let us consider the case of zero volatility, the condition for the European call option price at time $t=0$ to be always greater than the intrinsic value is written
\begin{equation}
S e^{-q T} - K e^{-r T} \geq S-K\,, \textmd{ for } S > K\,,\label{eqn:call_intr}
\end{equation}
where $S$ is the asset price, $T$ the option maturity, $K$ the option strike price.

When $r\geq 0$, and $q\leq 0$,  we have $S e^{-q T} \geq S$ and $K e^{-r T} \leq K$ and thus Equation \ref{eqn:call_intr} is verified. When $r < 0$, we may  factorize the discount factor  $S e^{-q T} - K e^{-r T} = e^{-r T} \left(S e^{(r-q) T} - K\right)$ and thus Equation \ref{eqn:call_intr} is verified if $r > q$. The two conditions may be merged together as $\left\{(r,q)\in \mathbb{R}^2 | r \geq 0\,, q \leq 0\right\} \cup \left\{(r,q)\in \mathbb{R}^2 | r \leq 0\,, q \leq r \right\} = \left\{(r,q)\in \mathbb{R}^2 | q \leq 0\,, q \leq r \right\} =  \mathcal{D}_{\textsf{call}}$.
As the price of a European option increases with the volatility, it must be then always greater that the intrinsic value at all time when $(r,q) \in  \mathcal{D}_{\textsf{call}}$. 
\end{proof}
The above result may also be proven more directly by factoring out the dividend discounting $e^{-q T}$. Our derivation based on the union of positive and negative rates, is however useful to make the link towards the existing literature clearer.
\begin{prop}
	It is never optimal to early-exercise an American put option when the interest rate $r$ and dividend yield $q$ are in the region
	\begin{equation} \mathcal{D}_{\textsf{put}} =  \left\{(r,q)\in \mathbb{R}^2 | r \leq 0\,, r \leq q \right\} \,.\end{equation}
\end{prop}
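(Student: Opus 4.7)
The plan is to mirror the proof of the preceding call proposition, swapping the roles of $r$ and $q$ as suggested by the McDonald--Schroder put-call symmetry remark in the introduction. Concretely, I would reduce the problem to checking that the European put price dominates the intrinsic value $K-S$ (for $S<K$) at every time-to-maturity in $[0,T]$, since this implies the American put price equals the European put price, so early exercise is never strictly beneficial. By monotonicity of the European put price with respect to volatility, it suffices to verify the inequality at zero volatility, where the European put value reduces to $K e^{-r T} - S e^{-q T}$.

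The key step would then be to establish, for $S<K$, the inequality
\begin{equation}
K e^{-r T} - S e^{-q T} \geq K - S,
\end{equation}
for all $(r,q)\in\mathcal{D}_{\textsf{put}}$, by splitting into two subcases that together cover $\mathcal{D}_{\textsf{put}}$. First, when $r\leq 0$ and $q\geq 0$, we immediately have $K e^{-r T}\geq K$ and $S e^{-q T}\leq S$, so the inequality is direct. Second, when $r\leq 0$ and $q<0$ with $r\leq q$, I would factor out the dividend discount $e^{-q T}$, writing
\begin{equation}
K e^{-r T} - S e^{-q T} = e^{-q T}\bigl(K e^{(q-r)T} - S\bigr),
\end{equation}
and use $q-r\geq 0$ to obtain $K e^{(q-r)T}-S\geq K - S$, together with $e^{-q T}\geq 1$, to conclude. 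Taking the union of the two subregions then recovers $\mathcal{D}_{\textsf{put}} = \{(r,q) : r\leq 0,\ r\leq q\}$, exactly as in the call proof.

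The main obstacle is not in the algebraic manipulations, which are completely parallel to the call case, but in making sure that the zero-volatility argument actually suffices to conclude that early exercise is never optimal at any positive volatility. I would address this by invoking monotonicity of the European put in $\sigma$ (which holds in the Black--Scholes model via the nonnegativity of vega), and by noting that the same inequality holds with $T$ replaced by any $\tau\in[0,T]$, so the intrinsic value is dominated at every intermediate time, not just at $t=0$. An alternative and shorter route would be to appeal directly to the McDonald--Schroder put-call symmetry, which maps an American call with parameters $(r,q)$ into an American put with parameters $(q,r)$; under this map, the region $\mathcal{D}_{\textsf{call}}$ of the previous proposition is transformed verbatim into $\mathcal{D}_{\textsf{put}}$, giving the result as an immediate corollary.
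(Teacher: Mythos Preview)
Your proposal is correct. The paper's own proof is precisely the ``alternative and shorter route'' you mention in your final paragraph: it invokes the McDonald--Schroder put-call symmetry \(V_{\textsf{call}}(S,K,r,q,\sigma,T)=V_{\textsf{put}}(K,S,q,r,\sigma,T)\) and observes that swapping \(r\leftrightarrow q\) carries \(\mathcal{D}_{\textsf{call}}\) onto \(\mathcal{D}_{\textsf{put}}\), so the result follows immediately from the call proposition. Your main argument --- a direct replay of the zero-volatility intrinsic-value bound with the two subcases \(\{r\le 0,\, q\ge 0\}\) and \(\{r\le 0,\, q<0,\, r\le q\}\) --- is valid and self-contained, but it duplicates the work already done for calls; the paper avoids this by treating the put as a corollary via symmetry. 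Either way the content is the same, and your remark about needing the inequality for every \(\tau\in[0,T]\) (not just at \(t=0\)) is a useful tightening that the paper glosses over in the call proof as well.
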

\begin{proof}
This is easily seen through the put-call symmetry relation of \citet{mcdonald1998parity}:
\begin{equation}
V_{\textsf{call}}(S,K,r,q,\sigma,T) = V_{\textsf{put}}(K,S,q,r,\sigma,T)\,,\label{eqn:american_putcallsymmetry}
\end{equation}
where $\sigma$ is the Black-Scholes volatility and $V_{\textsf{call}}$, $V_{\textsf{put}}$ are the prices of an American call (resp. put) option.
\end{proof}

In the context of interest rate derivatives, American options on LIBOR or EURIBOR futures traded on the Liffe have a future-like (as opposed to equity-like) margining process. The future style margining means than option premiums are not paid or received at the time of the transaction, but margins are paid or received everyday according to the changing value of the option. As such, those options have no additional early-exercise value compared to their European counterparts \citep{ren1993pricing,henrard2012in}. The proof is based on the convexity of the option payoff and stays true under negative (and stochastic) interest rates. In our notation, this corresponds to $r=q$, because the underlying is a Future, and $r=0$ because of the margining. Similar future options with future-like margining also occurs in other asset classes, for example, options on Brent crude oil futures, traded on the Chicago Mercantile Exchange (CME) or on the Intercontinental Exchange (ICE). This is true whether the domestic interest rate is positive or negative.

Across other asset classes, more generally, when interest rates are negative, it is never optimal to exercise an American option on a Future contract. This is because we fall in the case where $r=q$ and $r < 0$ (and thus $q <0$ as well).

\section{The exercise boundaries under negative rates}\label{sec:boundary_neg_rate}
\citet{battauz2015real} show that under the conditions 
\begin{equation}r < 0 \,,\quad  r-q-\frac{\sigma^2}{2} > 0\quad \textmd{ and}\quad \left(r-q-\frac{\sigma^2}{2}\right)^2 + 2 r \sigma^2  > 0\,,\label{eqn:battauz_cond}\end{equation}
two free-boundaries exist and it will then be optimal to exercise an American put option in between the two boundaries.

It can be verified that most of the analysis of \citet{battauz2015real} stays valid, even when the condition \ref{eqn:battauz_cond} does not hold. In particular, as long as the boundaries exist, the upper boundary $u(t)$ decreases with $t$, and the lower boundary $l(t)$ increases. Furthermore we have, $l(T^-) =\frac{rK}{q}$, $u(T^-)=K$, where $T$ is the option maturity. Similarly, the asymptotic formulas as $t \to T^-$ do not rely on  condition \ref{eqn:battauz_cond} explicitly, but on the existence of the boundaries. The upper and lower asymptotics $l^\star$ and $u^\star$ read
\begin{align}
u^\star(t) &= K - K\sigma \sqrt{(T-t)\ln \frac{\sigma^2}{8\pi(T-t)(r-q)^2}} \,,\\
l^\star(t) &= \frac{r K}{q} \left(1 + \alpha_0 \sigma \sqrt{2(T-t)}\right) \,,
\end{align}
	where $\alpha_0=0.451723$.

In order to plot the exercise boundaries in different settings, we rely on a finite difference discretization of the American option linear complementary problem by the TR-BDF2 scheme \citep{lefloch2014tr}, using the policy iteration algorithm of \citet{reisinger2012use} to solve the discrete non-linear problem exactly at each time-step\footnote{In the case of negative interest rates, the Brennan-Schwartz tridiagonal algorithm will not lead to an exact solution of the discrete linear complementary problem at consecutive time-steps: it implicitly assumes a single continuation region.}. 
Figure \ref{fig:exercise_boundary_put_negative5y} shows the exercise boundary using an interest rate $r=-0.5\%$ and a dividend yield $q=-1\%$, for an option of maturity 5 years and strike $K=100$, varying the Black-Scholes volatility $\sigma$.
\begin{figure}[!h]
	\subfigure[\label{fig:exercise_boundary_put_negative5y_4}$\sigma=4\%$]{
		\includegraphics[width=0.32\textwidth]{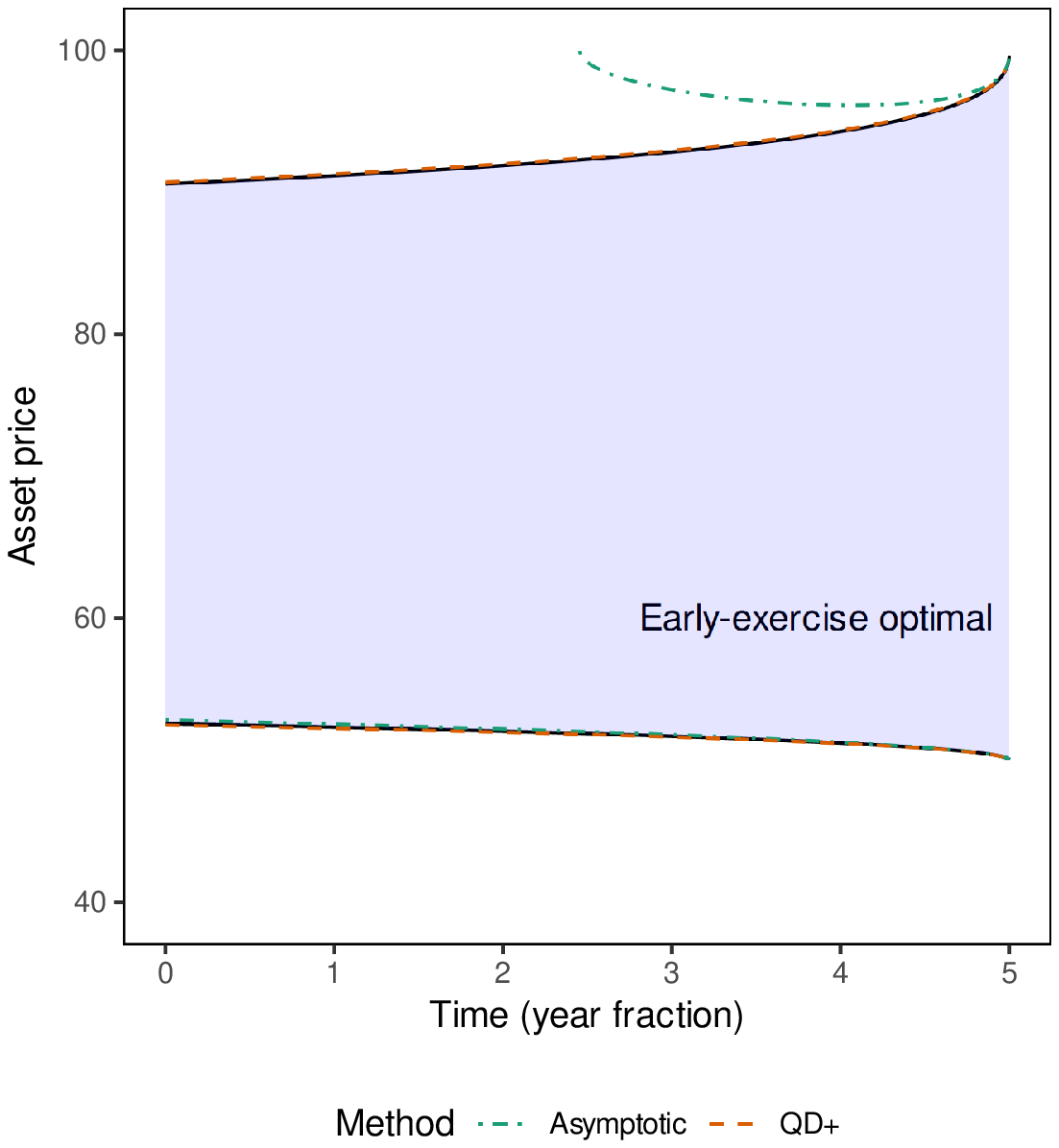}}
	\subfigure[\label{fig:exercise_boundary_put_negative5y_8}$\sigma=8\%$]{
		\includegraphics[width=0.32\textwidth]{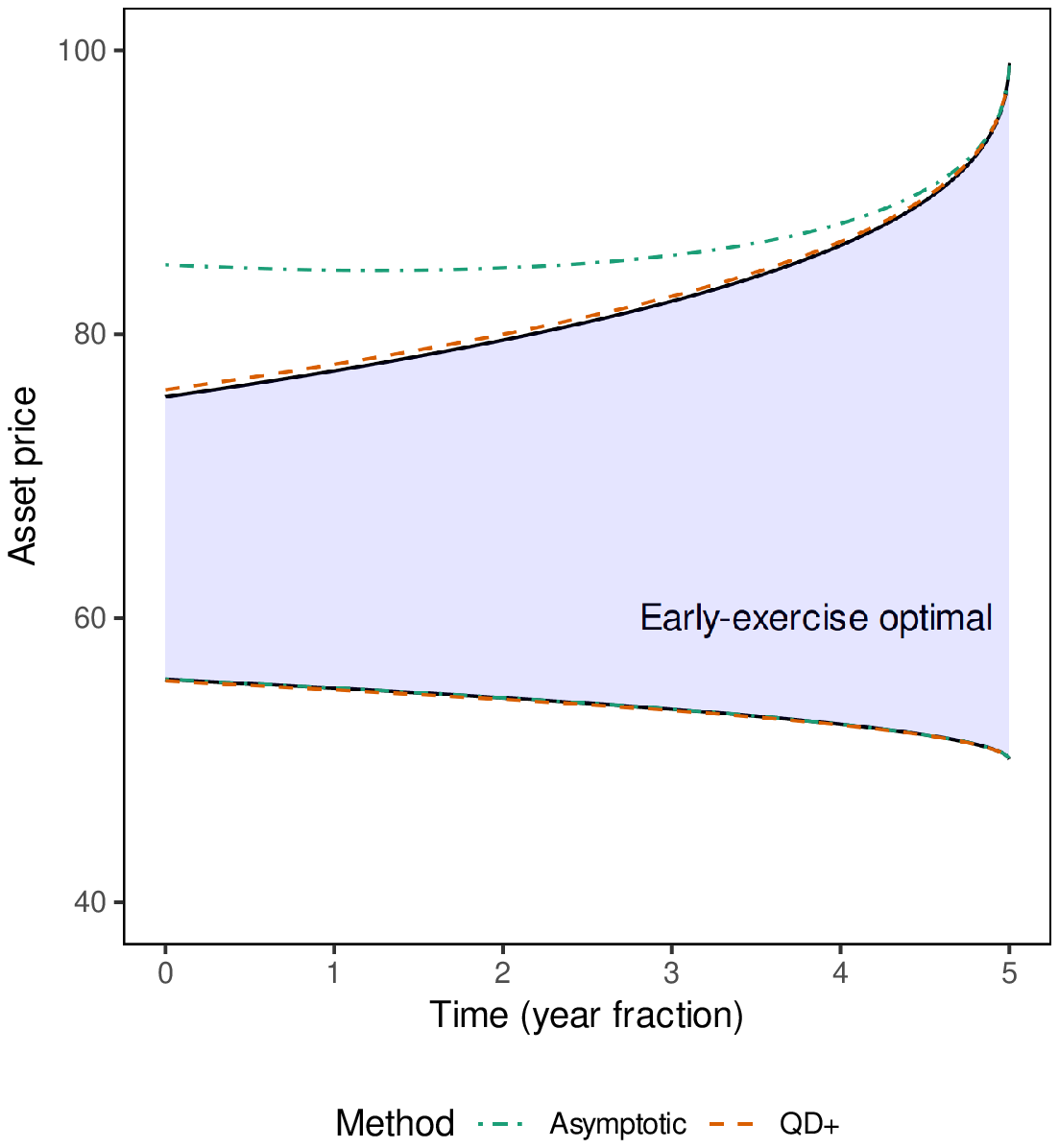}}
	\subfigure[\label{fig:exercise_boundary_put_negative5y_15}$\sigma=15\%$]{
		\includegraphics[width=0.32\textwidth]{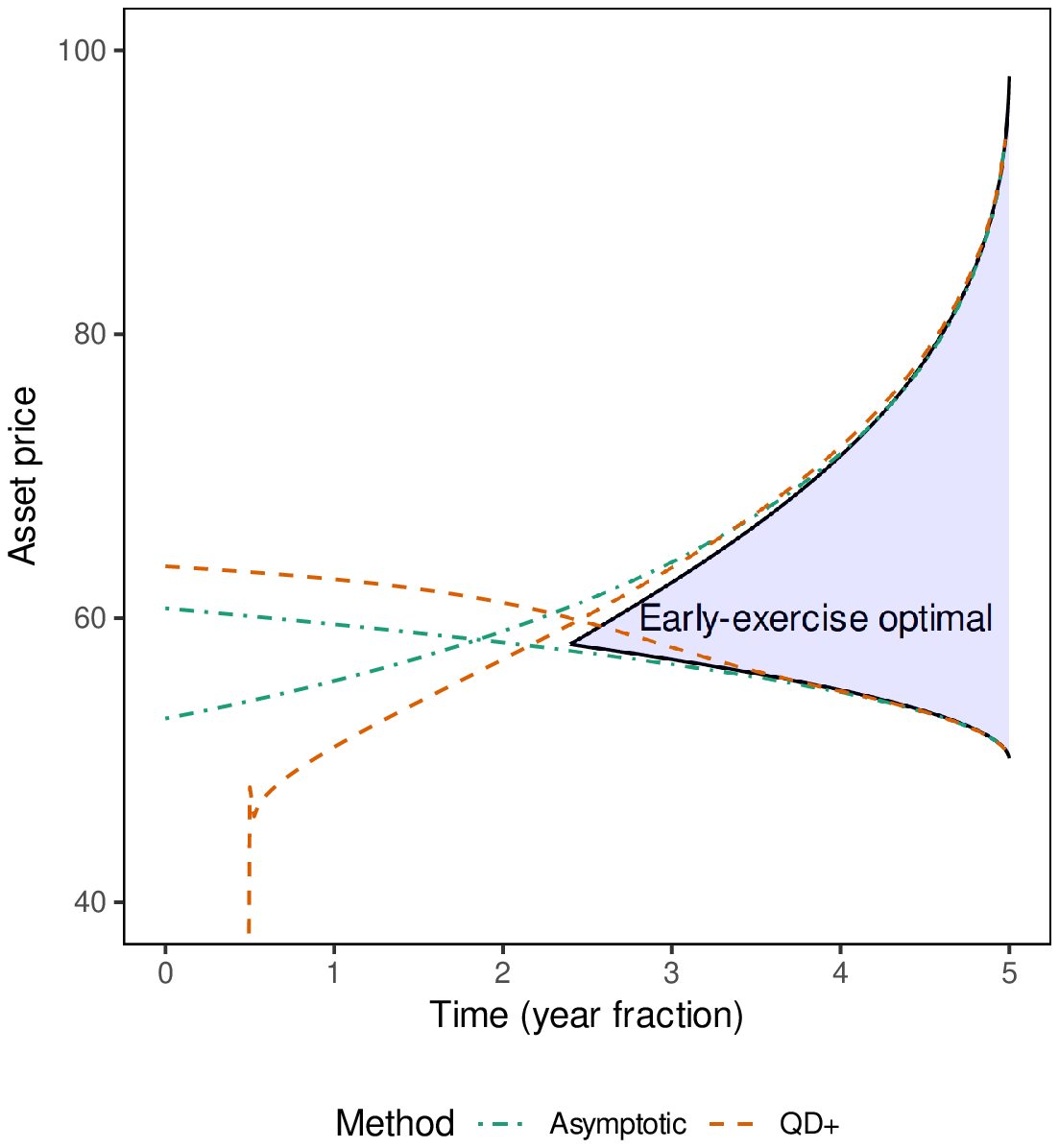} }
	\caption{Early exercise boundaries of a 5-year American put option for $r=-0.5\%, q=-1\%$ and different Black-Scholes volatilities $\sigma$. Early-exercise is optimal in the shaded region, the dashed lines correspond to the asymptotics near maturity.\label{fig:exercise_boundary_put_negative5y}}
\end{figure} 
The first case, $\sigma=4\%$ respects the inequalities of Equation \ref{eqn:battauz_cond}. The second case, $\sigma=8\%$ breaks the second inequality and the third case, $\sigma=15\%$ breaks both inequalities. There is no major difference in breaking one or both inequalities, since a plot with $\sigma=10\%$, which breaks both, would look almost like the case where $\sigma=8\%$. Similarly, if we extended the plot  of Figure \ref{fig:exercise_boundary_put_negative5y_8} to a maturity of 19 years, we would see the boundaries join, very much as in Figure \ref{fig:exercise_boundary_put_negative5y_15}. When the inequalities are broken, the early exercise region may start only at a later time, for example $t \approx 2.4$ for the case $\sigma=15\%$, the upper boundary crosses the lower boundary at this point.
As the volatility is increased, the region of early exercise is reduced and the price of the American put option becomes closer to the European option price.

While on those examples, the asymptotic formulas stay relatively close to the true boundaries, even when $t$ is further from $T$; this is not always true, especially for the upper boundary asymptotic formula. An example, where the upper asymptotic fails to represent the behavior for medium $t$ is given by the parameters $r=-2\%$, $q=-3\%$ and $\sigma=10\%$. This is not peculiar to negative interest rates, \citet{andersen2016high} give similar examples with positive rates.

The asymptotic formulas are not all that useful in practice, as they are really valid only for a very short time before the option maturity. In the case of positive interest rates, \citet{chen2007mathematical}  propose more global, implicit approximations, based on the behavior near maturity, for the case $q=0$. There exist however more accurate approximations of the exercise boundaries, not directly based on the behavior near expiry, which cover a large range of time, and are also relatively cheap to compute. We will see in the next section how to apply these in the context of negative rates.

\section{The QD\textsuperscript{+} approximation for the exercise boundary}
The efficient numerical technique of \citet{andersen2016high} to price American options relies on an initial guess for the exercise boundary. A good initial guess reduces the number of iterations required. Under negative rates, as the problem is less well-defined, a good initial guess is even more critical. \citet{andersen2016high} recommend the  QD\textsuperscript{+} approximation  of \citet{li2010analytical}. This section shows how to adapt the   QD\textsuperscript{+}  algorithm to handle negative rates and provide an initial guess for the two boundaries.

	\subsection{Adapting the QD\textsuperscript{+} algorithm for negative rates}
	
In order to improve the accuracy of the American option price obtained by the approximate formula of \citet{ju1999approximate}, based on the exercise boundary of \citet{barone1987efficient}. 
\citet{li2010analytical} derives a more accurate estimate of the exercise boundary, by solving the two continuity equations with the refined early-exercise premium formula of \citet{ju1999approximate}. This leads to the QD\textsuperscript{+} algorithm for the exercise boundary, which is also described in \citep{andersen2016high} and available as VBA code in \citep{staunton2016charm}.

The QD\textsuperscript{+} starts from the QD representation of \citet{barone1987efficient} and this is where the modification to find the two boundaries applies. \citet{barone1987efficient} find an estimate of the early-exercise premium, based on approximating the American option linear complementary problem by a simpler, related problem, which possesses a straightforward analytical solution, provided one knows the exercise boundary. This leads to the following solution for the early exercise premium $e$ of an American option on the asset $S$
\begin{equation}
e(S) = a_1 S^{\lambda_1} + a_2 S^{\lambda_2}\,,\label{eqn:early_premium_baw}
\end{equation}
where $a_1, a_2$ are to be determined and
\begin{align}
\lambda_{1} =  \frac{-(\beta -1) - \sqrt{(\beta-1)^2 + \frac{4\alpha}{h}}}{2}\,,&\quad
\lambda_{2} =  \frac{-(\beta -1) + \sqrt{(\beta-1)^2 + \frac{4\alpha}{h}}}{2}\,,\label{eqn:lambdas}\\
\alpha = \frac{2r}{\sigma} \,,\quad \beta = \frac{2(r-q)}{\sigma^2}\,,&\quad h(t) = 1 - e^{-r (T- t)}\,,
\end{align}
with $\sigma$ the Black-Scholes volatility.
When $\frac{\alpha}{h}>0$, which is always true for $r \in \mathbb{R}$, we have $\lambda_1 <0$ and $\lambda_2 > 0$. For an American call option, $a_1$ is set to 0, as, otherwise the function approaches $\infty$ when $S \to 0$. For an American put, the value of the premium must approach 0 as $S \to \infty$, and $a_2$ is thus set to 0.
 In turn, the exercise boundary $S^\star$  and the free parameter ($a_1$ for a put, $a_2$  for a call) are estimated by solving jointly the two equations corresponding to the continuity of option price at the boundary,
 \begin{equation}
\eta( S^\star - K) = V_{\textsf{E}}(S^\star, K,T) + e(S^\star)\,,\label{eqn:american_price_cont}
 \end{equation}
  and the continuity of the derivative of the  option price towards the asset price at the boundary (the so-called high contact condition)
 \begin{equation}
\eta   = \frac{\partial V_{\textsf{E}}(S^\star, K, T)}{\partial S} + e'(S^\star)\,,\label{eqn:american_derivative_cont}
 \end{equation}
  where $\eta = 1$ for a call option and $\eta=-1$ for a put, and $V_{\textsf{E}}(S,K,T)$ is the price of a European option of strike $K$ and maturity $T$.
 
 When interest rates are negative, there are two exercise boundaries, $S_1^\star$ and $S_2^\star$ with $S_2^\star$ < $S_1^\star$. The early exercise premium will thus be approximated by two pieces
 \begin{equation}
 e(S) = a_1 S^{\lambda_1} 1_{S \geq S_1^\star} + a_2 S^{\lambda_2}1_{S \leq S_2^\star}\,.\label{eqn:early_premium_baw_negative}
 \end{equation}
This leads to two independent systems to solve. Firstly, we find $a_1, S_1^\star$ through Equations \ref{eqn:american_price_cont} and \ref{eqn:american_derivative_cont} with the initial guess $K$ and secondly, we find $a_2, S_2^\star$ through Equations \ref{eqn:american_price_cont} and \ref{eqn:american_derivative_cont} with the initial guess $K \min \left( 1,\frac{r}{q} \right)$ for a put. For a call, this changes to respectively, $K$ and $\left( K\max\left( 1,\frac{r}{q} \right),  K\right)$. We may also use the put-call symmetry relation instead in practice.

The refinement of \citet{li2010analytical} consists in solving instead
\begin{equation}
\eta = \eta e^{-q T}\Phi(\eta d_1) + \frac{\left(\lambda + c_{0}\right) \left(\eta(S^\star - K) - V_{\textsf{E}}(S^\star,K,T)\right)}{S^\star}\,,
\end{equation}
with 
\begin{equation*}
c_{0} = - \frac{(1-h)\alpha}{2 \lambda_i + \beta - 1}\left( \frac{1}{h} - \frac{\Theta(S^\star) }{r(\eta(S^\star - K) - V_{\textsf{E}}(S^\star,K,T))}+ \frac{\lambda'(h)}{2\lambda+\beta-1}  \right)\label{eqn:qdplus}
\end{equation*}
where the function $\Theta$ is the time derivative of the (Black-Scholes) European option price with spot $S^\star$ and strike $K$, and $d_1 = \frac{\ln\frac{S^\star}{K} + (r-q+\frac{1}{2}\sigma^2) (T-t)}{\sigma \sqrt{T-t}}$.

The two boundaries may thus be computed by letting $\lambda= \lambda_1$ as specified by Equation \ref{eqn:lambdas}   to find $S^\star_1$, and then letting $\lambda=\lambda_2$ to find $S_2^\star$ (and using the corresponding $\lambda'=\frac{\partial \lambda}{\partial h}$).

As the boundaries are estimated independently, the boundaries may cross. In this case, the boundary estimates cannot be used to estimate the early-exercise premium $e(S)$ through Equation \ref{eqn:early_premium_baw_negative} or through the refined Ju-Zhong formula, and the best price we can give for the American option is the European option price. If the boundaries do not cross, the price will take into consideration only the closer boundary, and ignore the contribution due to the other boundary.
Similarly, when the two boundaries are close to each other, the approximation of the exercise premium becomes unreliable. The boundaries estimates are yet still surprisingly accurate.

\subsection{Example of non-convergence of the QD\textsuperscript{+} algorithm, when solved with Halley's method}\label{sec:qd_halley}
\citet{andersen2016high} recommend the use of Halley's method to solve the univariate non-linear Equation \ref{eqn:qdplus} corresponding to the QD\textsuperscript{+} early-exercise boundary approximation. While we found it to work well in general, and improve on Newton's method in terms of number of iterations and overall computational cost, it may sometimes oscillate when given a relatively poor initial guess.

An illustrative example corresponds to an option of strike $K=100$ and maturity $T=0.15$ with interest rate $r=2\%$, dividend yield $q=4\%$, volatility $\sigma=40\%$. Those parameters are within the standard range for equity options traded on the stock market. If we start Halley's method with the initial guess $S^\star_0 = 100$, the algorithm will oscillate between the two points 83.863283 and 89.224790 (Figure \ref{fig:qd_halley_100}), both far from the actual boundary of 48.488698.

\begin{figure}[h]
	\centering{
		\includegraphics[width=.9\textwidth]{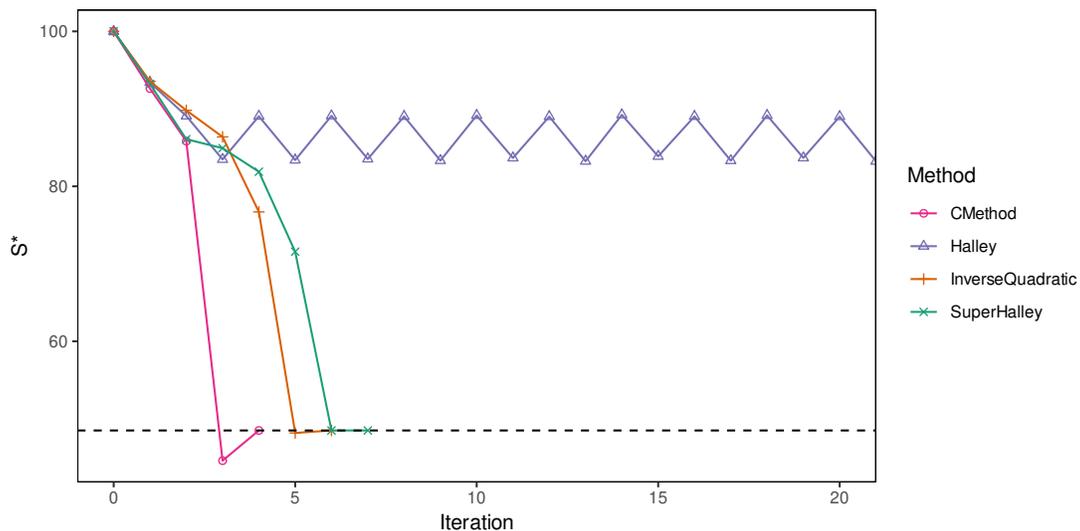}}
	\caption{Exercise boundary at $t=0$ for an American put option of strike $K=100$ and maturity $T=0.15$ with interest rate $r=2\%$, dividend yield $q=4\%$, volatility $\sigma=40\%$. obtained by the QD\textsuperscript{+} approximation, using different methods to solve the non-linear equation. The dashed line indicates the exact solution of the non-linear equation.\label{fig:qd_halley_100}}
\end{figure}

The same phenomenon is observed with an initial guess $S^\star_0 \in \{84, 89, 99, 100, 101, 104, 108, 114 \}$. 
A remedy is to use instead the inverse quadratic interpolation (also known as Chebyshev's method) \citep{amat2003geometric}, or an alternative third order method \citep{amat2007third} such as super Halley's method.

For the inverse quadratic interpolation, the iteration reads
\begin{equation}
S^\star_{n+1} = S^\star_n - \left(1+\frac{1}{2}L_f (S^\star_n) \right) \frac{f( S^\star_n)}{f'( S^\star_n)}\,,
\end{equation}
with $L_f(x) = \frac{f(x)f''(x)}{f'^2(x)}$ and $f$ corresponds to the non-linear equation for the QD\textsuperscript{+} early-exercise boundary estimate. 
For Halley's method, the iteration reads
\begin{equation}
S^\star_{n+1} = S^\star_n - \frac{1}{1-\frac{1}{2}L_f (S^\star_n)} \frac{f( S^\star_n)}{f'( S^\star_n)}\,.
\end{equation}
For Super Halley's method, we have
\begin{equation}
S^\star_{n+1} = S^\star_n - \left(1+ \frac{1}{2}\frac{L_f (S^\star_n)}{1-L_f (S^\star_n)}\right) \frac{f( S^\star_n)}{f'( S^\star_n)}\,.
\end{equation}
And for the C-method, 
\begin{equation}
S^\star_{n+1} = S^\star_n -\left( 1+\frac{1}{2}L_f (S^\star_n)  + C L_f(S^\star_n)^2 \right) \frac{f( S^\star_n)}{f'( S^\star_n)}\,.
\end{equation}
with $0 \leq C \leq 2$.

While the C-method with $C=2$ converges faster on this example, it is in general not true. When the QD\textsuperscript{+} boundary is used to start the fixed point algorithm for the Kim integral equation, such as in the algorithm of \citet{andersen2016high}, the C-method with $C=2$ requires more iterations than the alternatives to reach a given accuracy. The choice $C=\frac{1}{2}$, which makes the method then similar to the super Halley method for small values of $L_f(S_n^\star)$ seems more appropriate then. As an illustration, 
Table \ref{tbl:qd_iter} presents the mean number of iterations when solving the boundary at $m$ equidistant points between $t=T$ and $t=0$, where we reuse the previous value as initial guess. 
\begin{table}[h]
	\caption{Mean number of iterations to solve the QD\textsuperscript{+} boundary  on $m=100$ equidistant points, with $T=5$, $K=100$, $\sigma=40\%$, $S^\star_0(t_{m-1}) = \min\left(1,\frac{r}{q}\right)$ and an error tolerance on $f(S^\star)$ of $10^{-6}$.\label{tbl:qd_iter} }
	\centering{
		\begin{tabular}{lrrrr}\toprule
			Method  &  $r=2\%, q=4\%$  & $r=q=2\%$ &$r=2\%, q=0\%$ \\	\midrule
			Halley  &  2.43 & 2.74 & 2.80 \\
			Super Halley & 2.43 & 2.48 & 2.41 \\
			Inverse quadratic & 2.60 & 2.88 & 2.91\\
			C-Method $C=2$ & 2.79 & 3.08 & 3.08\\			
			C-Method $C=0.5$ & 2.43 & 2.46 & 2.48\\			
			\bottomrule
	\end{tabular}}
\end{table}
Overall, super Halley's method is the most efficient on this problem, closely followed by the C-method with $C=\frac{1}{2}$.

\subsection{Accuracy of the approximation}
On Figure \ref{fig:exercise_boundary_put_negative5y}, we plot the boundaries obtained by the different approximations for the same examples as in Section \ref{sec:boundary_neg_rate}. Our adjustments for negative rates allow us to approximate the two exercise boundaries well, especially when the boundaries do not cross ($\sigma \leq 8\%$). We found the QD\textsuperscript{+} approximation to be more accurate than alternatives we tested, such as the lower-bound approximation described in Appendix \ref{sec:lower_bound_boundary}, especially for the upper boundary, which is similar to what is observed in \citep{li2010analytical} in the case of positive rates. For the lower boundary, the lower-bound approximation was however found to be slightly sharper. 

With larger interest rates or dividend yields, the observations stay very similar. In particular, a plot with $r=-2\%$, $q=-3\%, \sigma=10\%$ would be very similar to Figure \ref{fig:exercise_boundary_put_negative5y_15}. 

We also see on Figure \ref{fig:exercise_boundary_put_negative5y_15} that the boundary may go to zero, this is because Newton's method fails to find a solution, as exercise is never optimal. As this happens before the crossing time, in practice, it will not matter: there is no need to calculate the boundaries before the crossing time.


\citet{li2010analytical} proposes another alternative named QD*, where an additional term is taken into account in the approximation of the boundary. The QD* approximation tends to give a sharper upper boundary in the case of low interest rate and dividend yield. It is however not markedly more accurate in the general case. It is possible to further refine the additional term, and this helps to capture the lower boundary better. 
This however does not really help in the general case either. We find that the additional complexity added by those adjustments did not justify the slightly better accuracy observed in a few specific cases.





\subsection{When the Ju-Zhong formula fails}

We have seen that when the boundaries cross, we cannot apply the Ju-Zhong approximation of the American option price. As soon as the inequalities of Equation \ref{eqn:battauz_cond} do not hold, there exists a long enough maturity such that the boundaries will cross and the mispricing will be large.

In Table  \ref{tbl:jz_misprice_8}, we consider the same parameters as in Section \ref{sec:boundary_neg_rate}, with a volatility $\sigma=8\%$ and increasing option maturity date $T$.
For $T=15$ years, the boundaries have not yet crossed, and yet the error of the Ju-Zhong formula is very large. The option price corresponding to an asset price $S=120$ is completely wrong.
\begin{table}[h]
	\caption{American put option prices with strike $K=100$ and  $\sigma=8\%, r=-0.5\%, q=-1\%$, varying the strike and maturity. $u_{\textsf{QD}^+}$ and $l_{\textsf{QD}^+}$ are the upper and lower early-exercise boundaries approximations of the QD\textsuperscript{+} algorithm for each maturity.\label{tbl:jz_misprice_8}}
	\centering{
		\begin{tabular}{lrrrr}\toprule
			$S$ & European & TR-BDF2 & Ju-Zhong  (error) &  Kim-QD\textsuperscript{+} (error)\\\midrule
			\multicolumn{5}{c}{$T=10, u_{\textsf{QD}^+}(0) = 69.62, l_{\textsf{QD}^+}(0)=58.72$}\\\cmidrule(lr){1-5}
		100 & 8.368 & 8.598 & 8.618   (0.020) &  8.608 (0.010)\\
		120 &2.886   & 2.952 &  2.954 (0.002) & 2.955 (0.003)\\\cmidrule(lr){1-5}
					\multicolumn{5}{c}{$T=15, u_{\textsf{QD}^+}(0) = 64.91, l_{\textsf{QD}^+}(0)=60.95$}\\\cmidrule(lr){1-5}
		100  & 9.988 & 10.287 & \textbf{11.442 (1.235)} &  10.303 (0.016)\\
		120 & 4.295  & 4.410 & \textbf{15.453 (11.033)} & 4.416 (0.006) \\\cmidrule(lr){1-5}
			\multicolumn{5}{c}{$T=20, u_{\textsf{QD}^+}(0) = 60.91, l_{\textsf{QD}^+}(0)=62.45$}\\\cmidrule(lr){1-5}
		100  & 11.337 & 11.684 & 11.337 (-0.347) &11.702 (0.018) \\
		120 & 5.527  & 5.687 & 5.527  (-0.160) & 5.695 (0.008)\\\bottomrule	
	\end{tabular}}
\end{table}
On a different example (Table \ref{tbl:jz_misprice_22}, $T=5$), the Ju-Zhong formula leads to a negative early-exercise premium.
\begin{table}[h]
	\caption{American put option prices with strike $K=100$ and  $\sigma=22\%, r=-1\%, q=-3\%$, varying the strike and maturity.\label{tbl:jz_misprice_22}}
	\centering{
		\begin{tabular}{lrrrr}\toprule
			$S$ & European & TR-BDF2 & Ju-Zhong (error) & Kim-QD\textsuperscript{+} (error)\\\midrule
			\multicolumn{5}{c}{$T=3, u_{\textsf{QD}^+}(0) = 55.37, l_{\textsf{QD}^+}(0)=42.60$}\\\cmidrule(lr){1-5}
			100 & 13.062 & 13.321 & 13.352 (0.031)  & 13.334 (0.013) \\
			120 & 6.979   & 7.102 &  7.108 (0.006) & 7.109 (0.007)\\\cmidrule(lr){1-5}
			\multicolumn{5}{c}{$T=5, u_{\textsf{QD}^+}(0) = 47.39, l_{\textsf{QD}^+}(0)=45.97$}\\\cmidrule(lr){1-5}
			100  & 16.405 & 16.763 & \textbf{16.035 (-0.728)} & 16.782 (0.021)\\
			120 & 10.312  &10.525 & \textbf{10.157 (-0.368)} & 10.537 (0.012)\\\cmidrule(lr){1-5}
			\multicolumn{5}{c}{$T=7, u_{\textsf{QD}^+}(0) = 40.98, l_{\textsf{QD}^+}(0)=48.04$}\\\cmidrule(lr){1-5}
			100  & 19.082 &  19.494 & 19.082 (-0.412) & 19.517 (0.023)\\
			120 & 13.035  & 13.315 & 13.035 (-0.280) & 13.330 (0.015)\\\bottomrule	
	\end{tabular}}
\end{table}

In many cases, the Ju-Zhong formula stays accurate under negative rates. But when the early-exercise boundaries become close to each other, the formula will lead to absurd prices and cannot be relied on. It is however not easy to guess when the formula will break down. 


\section{Numerical techniques to price American options under negative rates}
We will focus on the case of the American put option, as the put-call symmetry formula may be used  at a high level to calculate the American call option price from the American put option price, or alternatively at a lower level, to derive equivalent integral representations for call options.

\subsection{Andersen and Lake algorithms under positive rates}
The technique of \citet{andersen2016high} is based on the solving the integral equation for the price continuity or the high contact conditions. \citet{kim1990analytic} derives the following equation for the value of an American put option $V_A$
\begin{equation}
V_{A} = V_{E} + \int_{0}^T r K e^{-r t} \Phi(-d_2(S,S^\star(t),t)) - q S e^{-q t} \Phi(-d_1(S,S^\star(t),t))dt \,,\label{eqn:kim_positive}
\end{equation}
where $V_E$ is the European option price obtained by the Black-Scholes formula, $d_1(S,B,t) = \frac{\ln\frac{S}{B} + (r-q+\frac{1}{2}\sigma^2) t)}{\sigma \sqrt{t}}$, $d_2 = d_1 -\sigma\sqrt{t}$ and $\Phi$ is the cumulative normal distribution function. $S^\star(t)$ denotes the exercise boundary at time $t$.

Equation \ref{eqn:kim_positive} translates to the following system for the price continuity and high contact conditions at the exercise boundary $S^\star$:
\begin{align}
K-S^\star(t_i)  = &K e^{-r(T-t_i)} \Phi(-d_2(S^\star(t_i),K,T-t_i)) - S^\star(t_i) e^{-q(T-t_i)} \Phi(-d_1(S^\star(t_i),K,T-t_i))\nonumber\\
&+ \int_{t_i}^T r K e^{-r(t-t_i)} \Phi(-d_2(S^\star(t_i),S^\star(t),t-t_i))dt\nonumber\\
&-\int_{t_i}^Tq S^\star(t_i) e^{-q(t-t_i)} \Phi(-d_1(S^\star(t_i),S^\star(t),t-t_i))dt\,,\label{eqn:kim_integral_b}
\end{align}
\begin{align}
-1 = &- e^{-q(T-t_i)} \Phi(-d_1(S^\star(t_i),K,T-t_i))\nonumber\\
&+ \int_{t_i}^T r \frac{K}{S^\star(t_i)} e^{-r(t-t_i)} \frac{\phi(-d_2(S^\star(t_i),S^\star(t),t-t_i))}{\sigma\sqrt{t-t_i}}dt\nonumber\\
&- \int_{t_i}^T q  e^{-q(t-t_i)} \left( \frac{\phi(-d_1(S^\star(t_i),S^\star(t),t-t_i)}{\sigma\sqrt{t-t_i}} +  \Phi(-d_1(S^\star(t_i),S^\star(t),t-t_i)) \right)dt \label{eqn:kim_high_contact_b}\,.
\end{align}
Instead of solving the system, for example using an exponential-linear parameterization, where the abscissa and slopes are calibrated at each time-step to verify each equation as in \citep{ju1998pricing}, \citet{andersen2016high} solve only a single equation, at all time-steps together, with high accuracy, such that, in practice, the other equation will hold.

For a representation of the exercise boundary on $m$ points $\bm{S}^{\star}=(S^\star_0,...,S^\star_{m-1})
$, the fixed point iteration FP-B, based on Equation \ref{eqn:kim_integral_b}, reads
\begin{equation}
\bm{S}^{\star j}_i = K \frac{N(t_i,\bm{S}^{\star j-1})}{D(t_i,\bm{S}^{\star j-1})}\,,\quad \textmd{ for } i=0,...,m-1\,,
\end{equation}
with
\begin{align}
N(t_i,\bm{B}) &= 1-e^{-r(T-t_i)} \Phi(-d_2(B_i,K,T-t_i))- \int_{t_i}^T r e^{-r(t-t_i)} \Phi(-d_2(B_i,B_t,t-t_i)) dt\,,\\
D(t_i,\bm{B}) &= 1 -  e^{-q(T-t_i)} \Phi(-d_1(B_i,K,T-t_i))- \int_{t_i}^T q e^{-q(t-t_i)} \Phi(-d_1(B_i,B_t,t-t_i))dt\,,
\end{align} 
where $B_t$ is the value of the representation at time $t$, based on the knots $\bm{B}$. In \citep{andersen2016high}, it is the value of the collocation polynomial. The technique does not depend on a specific representation and may also be applied with the exponential linear spline of \citet{aitsahlia2001exercise}, where the integrals have an analytical expression in terms of the cumulative normal distribution. The use of a (Chebyshev) collocation polynomial allows however for a higher order of convergence, and is particularly efficient when combined with the tanh-sinh quadrature and a proper time-variable transformation\footnote{\citet{andersen2016high} write the equation in terms of $\tau=T-t_i$, and use the first transform $u=T-t$ to obtain their Equation (2.12), and then the second transform $y=-1+\sqrt{\frac{\tau-u}{\tau}}$ to obtain their equations (5.9)-(5.11). The collocation is applied to the function $\left(\ln \frac{S^\star(\tau)}{K \min(1,r/q)}\right)^2$. }, as recommended in \citep{andersen2016high}.

 The FP-A method follows the same iteration, but with a numerator and denominator based on Equation \ref{eqn:kim_high_contact_b}. With the fixed-point method FP-A, it is particularly important to add symmetry to Equation \ref{eqn:kim_high_contact_b}, as suggested in \citep{kim2013simple,andersen2016high}, in order to stabilize the iteration:
\begin{align}
N(t_i,\bm{B}) =& -e^{-r(T-t_i)} \frac{\phi(d_2(B_i,K,T-t_i))}{\sigma\sqrt{T-t_i}}-\int_{t_i}^T r e^{-r(t-t_i)} \frac{\phi(d_2(B_i,B_t,t-t_i))}{\sigma\sqrt{t-t_i}} dt\,,\\
D(t_i,\bm{B}) =& 1-e^{-q(T-t_i)} \frac{\phi(d_1(B_i,K,T-t_i))}{\sigma\sqrt{T-t_i}} -  e^{-q(T-t_i)} \Phi(-d_1(B_i,K,T-t_i))\nonumber\\
&- \int_{t_i}^T  q  e^{-q(t-t_i)} \left( \frac{\phi(d_1(B_i,B_t,t-t_i)}{\sigma\sqrt{t-t_i}} +  \Phi(-d_1(B_i,B_t,t-t_i)) \right)dt \,,
\end{align} 
where the symmetry between the integral and non-integral terms is restored by using the identity $e^{-r t}\frac{\phi(d_2(S,K,t))}{\sigma \sqrt{t}} = e^{-q t}\frac{\phi(d_1(S,K,t))}{\sigma \sqrt{t}}$.


In the case of a piecewise exponential linear representation of the boundary, the fixed point method achieves a similar performance for a given accuracy, as an iterative solution of each unidimensional equation. The FP-A and FP-B methods really become interesting when the exercise boundary is represented fully by a collocation polynomial. Then, the calculation of the exercise boundary cannot be decomposed into $m$ low-dimensional sub-problems. The choice is between a multidimensional non-linear solver (such as Gauss-Newton) or the fixed point method. Our tests with the piecewise exponential representation suggest that we may expect the Gauss-Newton method to be not much slower than the fixed point method, as long as the Jacobian is computed "analytically". In the context of the algorithm of \citet{andersen2016high}, this may be achieved by using algorithmic differentiation to compute the Jacobian.

\subsection{FP-A vs. FP-B}
Figure \ref{fig:al_continuity_20_8_4} shows that the FP-A method leads to a small discontinuity in the option price around the exercise boundary. This is particularly visible for a small number of collocation points $m$, such as $m=3$. Similarly, the FP-B method leads a discontinuity in the option delta. In general, the discontinuities disappear quickly when $m$ is increased. On our example, the discontinuities become very small for $m \geq 4$.

\begin{figure}[h]
	\subfigure[Price]{
		\includegraphics[width=0.49\textwidth]{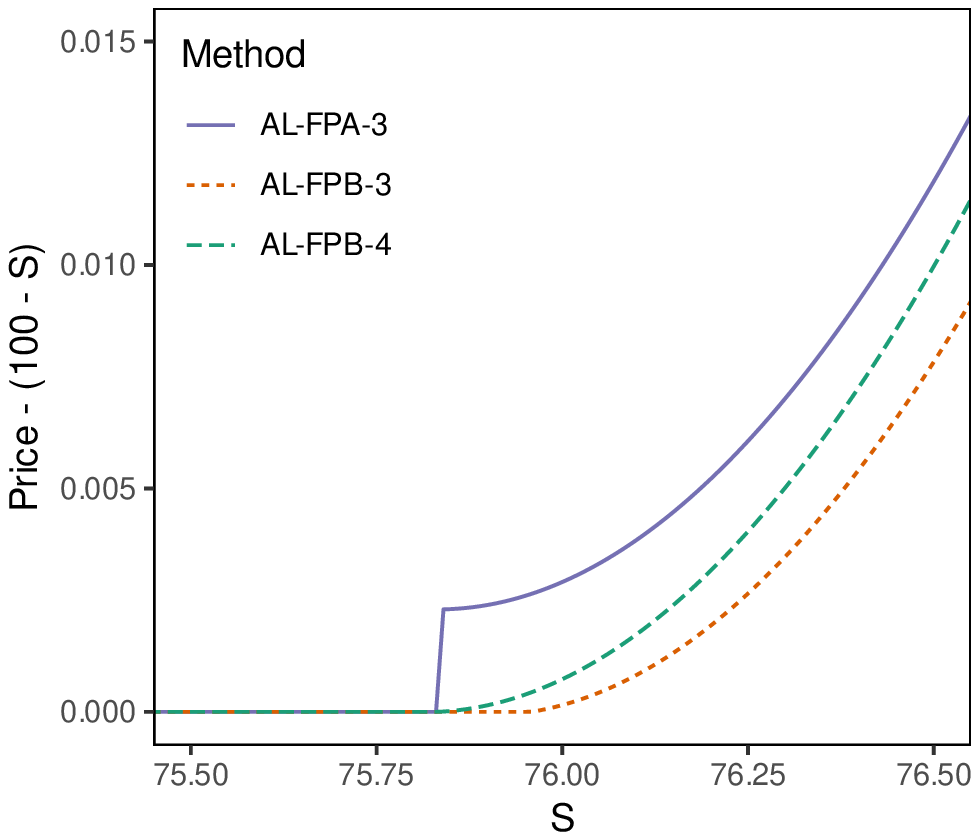}}
	\subfigure[Delta]{
		\includegraphics[width=0.49\textwidth]{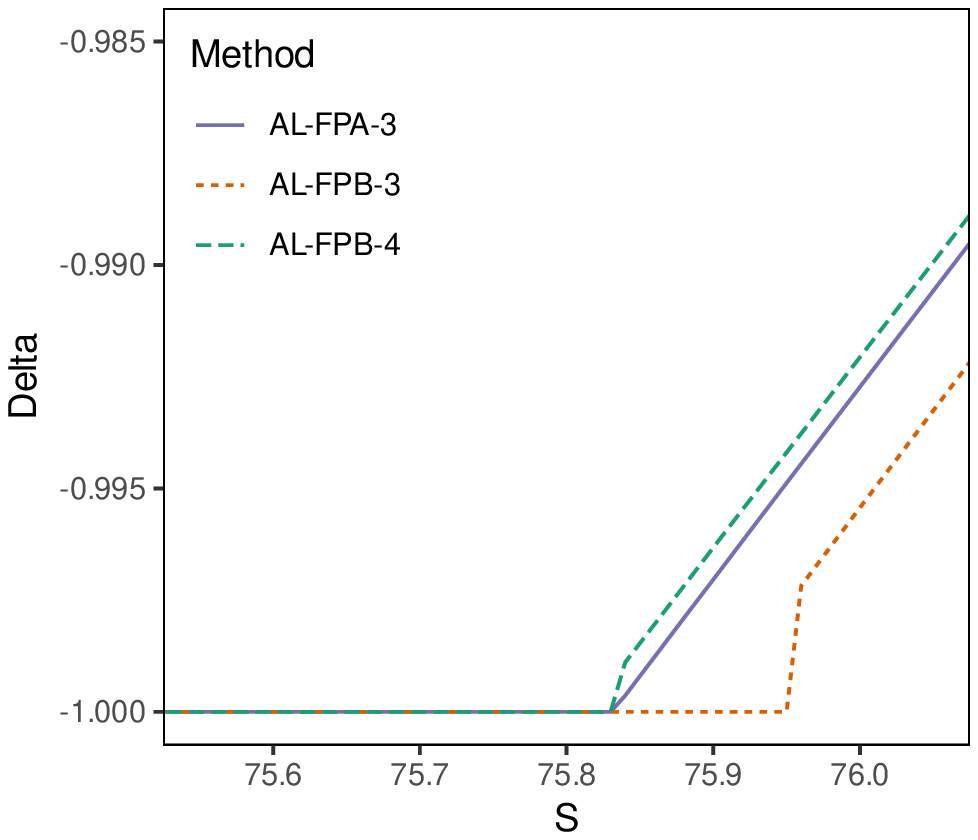}}
	\caption{Price and delta of an American put option with $r=8\%, q=4\%, \sigma=20\%, T=3, K=100, S=100$, using various methods to solve the integral equation.\label{fig:al_continuity_20_8_4}}
\end{figure} 

The FP-A method may be unstable for $q < r$. \citet{andersen2016high} give as example, an option of strike $K=100$ and maturity $T=3$ years, with the model parameters $r=10\%$, $q=1\%$, $\sigma=10\%$. The issue is unrelated to the use of a collocation polynomial, and we reproduce it as well with a piecewise exponential representation. Concretely, the boundary starts to oscillate strongly near $t=0$, and leads to an incorrect price. Furthermore, we noticed the same instability for a shorter time to maturity ($T=1$) for a large enough number of iterations. The instability is also more pronounced for longer maturities, even for lower interest rates ($T=10$ and $r=5\%$) or a smaller dividend yield ($q=0$). If, instead of the fixed-point iteration, we apply the Gauss-Newton method of \citet{klare2013gn} on the $m$-dimensional system composed of Equation \ref{eqn:kim_high_contact_b} at times $t_i$, a solution with residuals close to zero is found. 
\begin{table}[h]
	\caption{Accuracy of various methods to price an option of strike $K=100$ and maturity $T$ with $S=100, q=1\%, \sigma=10\%$, using $m=10$ knots, and 32 iterations for the fixed point method (FP-A, FP-B), and a solver tolerance of 1E-8 for the Gauss-Newton method (GN-A, GN-B). } 
	\centering{
		\begin{tabular}{lrrrrrrrr}\toprule
			Method  
				&\multicolumn{2}{c}{$T=3, r=10\%$ }	&\multicolumn{2}{c}{$T=3, r=1\%$ } & \multicolumn{2}{c}{$T=10, r=5\%$ }\\  \cmidrule(lr){2-3}\cmidrule(lr){4-5}\cmidrule(lr){6-7}
			&	Price & Error  & Price & Error &   Price & Error \\
			 FP-B & 1.94358 & 1.0E-5 & 6.73805 & 2.8E-9 & 1.97729 & 4.2E-5 \\
			 GN-B & 1.94358 & 1.0E-5 & 6.73805 & 2.8E-9 & 1.97729 & 4.2E-5 \\
		 FP-A &  \textbf{1.40620} & \textbf{-5.4E-1} & 6.73805 & 5.7E-9  &  \textbf{0.02448} &\textbf{-2.0E-0}\\
		 GN-A & 1.94358 & -4.6E-6 & 6.73805 & 5.7E-9 & 1.97729 & -1.8E-5 \\
			\bottomrule
	\end{tabular}}
\end{table}
A possible simple mitigation for the FP-A instability is to take the iterate with the smallest $L_2$-error. We prefer to focus only on the stable FP-B method from now on.

\subsection{Adapting the algorithm for negative rates}

The Kim equation is valid for a single exercise boundary. When there are two boundaries, a correction term related to the lower boundary must be added and the general formula then reads (see Appendix \ref{sec:kim_neg_proof} for a proof)

\begin{align}
V_{A} = V_{E} &+ \int_{t_{s}}^T r K e^{-r t} \Phi(-d_2(S,u(t),t)) - q S e^{-q t} \Phi(-d_1(S,u(t),t))dt \nonumber\\
&- \int_{t_{s}}^T r K e^{-r t} \Phi(-d_2(S,l(t),t)) - q S e^{-q t} \Phi(-d_1(S,l(t),t))dt\,,\label{eqn:kim_negative}
\end{align}
where $t_s$ is the crossing time of the upper boundary $u(t)$ with the lower boundary $l(t)$, or 0 if they do not cross.


In the case of negative interest rates, there are two boundaries $u(t)$ and $l(t)$ to solve together. Equation \ref{eqn:kim_negative}, evaluated at $S=u(t_i)$ and $S=l(t_i)$, leads to the following system of equations for the price continuity condition
\begin{align}
K-u(t_i)  = &K e^{-r(T-t_i)} \Phi(-d_2(u(t_i),K,T-t_i)) - u(t_i) e^{-q(T-t_i)} \Phi(-d_1(u(t_i),K,T-t_i))\nonumber\\
&+ \int_{\max(t_i,t_s)}^T r K e^{-r(t-t_i)} \left[\Phi(-d_2(u(t_i),u(t),t-t_i))- \Phi(-d_2(u(t_i),l(t),t-t_i)) \right]dt\nonumber\\
&-\int_{\max(t_i,t_s)}^Tq u(t_i) e^{-q(t-t_i)} \left[\Phi(-d_1(u(t_i),u(t),t-t_i)) -  \Phi(-d_1(u(t_i),l(t),t-t_i))\right]dt \,,\label{eqn:kim_integral_b_u}\\
K-l(t_i)  = &K e^{-r(T-t_i)} \Phi(-d_2(l(t_i),K,T-t_i)) - l(t_i) e^{-q(T-t_i)} \Phi(-d_1(l(t_i),K,T-t_i))\nonumber\\
&+ \int_{\max(t_i,t_s)}^T r K e^{-r(t-t_i)} \left[\Phi(-d_2(l(t_i),u(t),t-t_i))- \Phi(-d_2(l(t_i),l(t),t-t_i)) \right]dt\nonumber\\
&-\int_{\max(t_i,t_s)}^Tq l(t_i) e^{-q(t-t_i)} \left[\Phi(-d_1(l(t_i),u(t),t-t_i)) -  \Phi(-d_1(l(t_i),l(t),t-t_i))\right]dt \,.\label{eqn:kim_integral_b_l}
\end{align}
The above system may be solved by a $2m$-dimensional Gauss-Newton method, starting with the initial guess given by the upper-bound/lower-bound algorithm, or by the QD\textsuperscript{+} algorithm. 

In the special case of crossing boundaries, the initial guess is adjusted as follows:
\begin{itemize}
	\item From $i=m-1$ downwards, we look up the largest index $s$ such that $u^\star(t_s) \leq l^\star(t_s)$, where $u^\star,l^\star$ are the initial guesses for the upper and lower early-exercise boundaries.
	\item Define $c^\star = \min\left(\max\left(u^\star(t_{s}),l^\star(t_{s+1})\right),l^\star(t_{s+1})\right)$. 
	\item For $i \leq s$, set $l^\star(t_i)=u^\star(t_i)=c^\star$.  
\end{itemize}
Furthermore, during the objective function evaluation, we enforce those constraints, as well as the monotonicity constraint explicitly. 

A further improvement, which helps to increase the accuracy, is to search for an estimate of the crossing time by sub-division, stopping when the distance between consecutive points is smaller than a given threshold (for example $\Delta t < 10^{-2}$). Then we use this estimate for $t_s$. This allows us to collocate and integrate where it matters (i.e. when the boundaries have not yet crossed, from $t_s$ to $T$). The algorithm of \citet{andersen2016high} is then trivially adjusted by using $\tau_{\max} = \tau_s = T-t_s$ instead of $\tau_{\max}=T$ in order to compute the boundaries, and by making sure to evaluate the cumulative normal distributions in the price using Equation \ref{eqn:kim_negative} at the shifted time $\tau+T-\tau_s$.
 There will however be a small error as $t_s$ is estimated from an approximation. We found this error to be much smaller than a collocation from $0$ to $T$, and the resulting algorithm to be more stable when the Gauss-Newton solver with explicit constraints was used. The constraints are then really only useful to cater for corner cases, where the boundaries are very close to each other and an update of the collocation points introduces a crossing.


Regardless of any crossing, we noticed instabilities when the fixed point method FP-B is applied to Equations \ref{eqn:kim_integral_b_u} and \ref{eqn:kim_integral_b_l} in straightforward fashion through the iteration\footnote{ \citet{andersen2016high} express the fixed-point iteration in terms of $\tau=T-t$, $\Phi(d_1)$, $\Phi(d_2)$ and a scaled strike $Ke^{-(r-q)\tau}$. As a consequence, with negative rates, beside the new integral terms for the second boundary, the additional terms $-1 + e^{r\tau}$ and $-1 +e^{q\tau}$ need to be added respectively to their formula for numerator and denominator, corresponding to Equations (3.7) and (3.8) of their paper.}
\begin{equation}
\begin{cases}
\bm{u}^{j}_i = K \frac{N(t_i,\bm{u}^{j-1},\bm{l}^{j-1})}{D(t_i,\bm{u}^{j-1},\bm{l}^{j-1})}\,,\\
\bm{l}^{j}_i = K \frac{N(t_i,\bm{l}^{j-1},\bm{u}^{j-1})}{D(t_i,\bm{l}^{j-1},\bm{u}^{j-1})}\,,
\end{cases}\quad \textmd{ for } i=0,...,m-1\,,
\end{equation}
with
\begin{align}
N(t_i,\bm{B}^u,\bm{B}^l) =& 1-e^{-r(T-t_i)} \Phi(-d_2(B^u_i,K,T-t_i))\nonumber\\
&- \int_{\max(t_i,t_s)}^T r e^{-r(t-t_i)} \left[\Phi(-d_2(B^u_i,B^u_t,t-t_i)) -\Phi(-d_2(B^u_i,B^l_t,t-t_i))\right] dt\,,\\
D(t_i,\bm{B}^u,\bm{B}^l) =& 1 -  e^{-q(T-t_i)} \Phi(-d_1(B^u_i,K,T-t_i))\nonumber\\
&- \int_{\max(t_i,t_s)}^T q e^{-q(t-t_i)} \left[\Phi(-d_1(B^u_i,B^u_t,t-t_i)) -\Phi(-d_1(B^u_i,B^l_t,t-t_i)) \right]dt\,,
\end{align} 
where $B_t^{u}$, $B_t^{l}$ are the value of the representation at time $t$, based on the respective knots $\bm{B^u},\bm{B^l}$ for the upper and lower boundaries. It may be the value of exponential linear spline  \citep{ju1998pricing,aitsahlia2001exercise} or the value of the collocation polynomial \citep{kim2013simple,andersen2016high}. 

This was particularly visible for longer maturities (Figure \ref{fig:exercise_boundary_put_negative15y_8_alfpb}) where the lower boundary oscillates as the number of iterations is increased, and the fixed point method does not converge to the correct solution.
\begin{figure}[h]
	\subfigure[\label{fig:exercise_boundary_put_negative15y_8_alfpb}FP-B]{
		\includegraphics[width=0.48\textwidth]{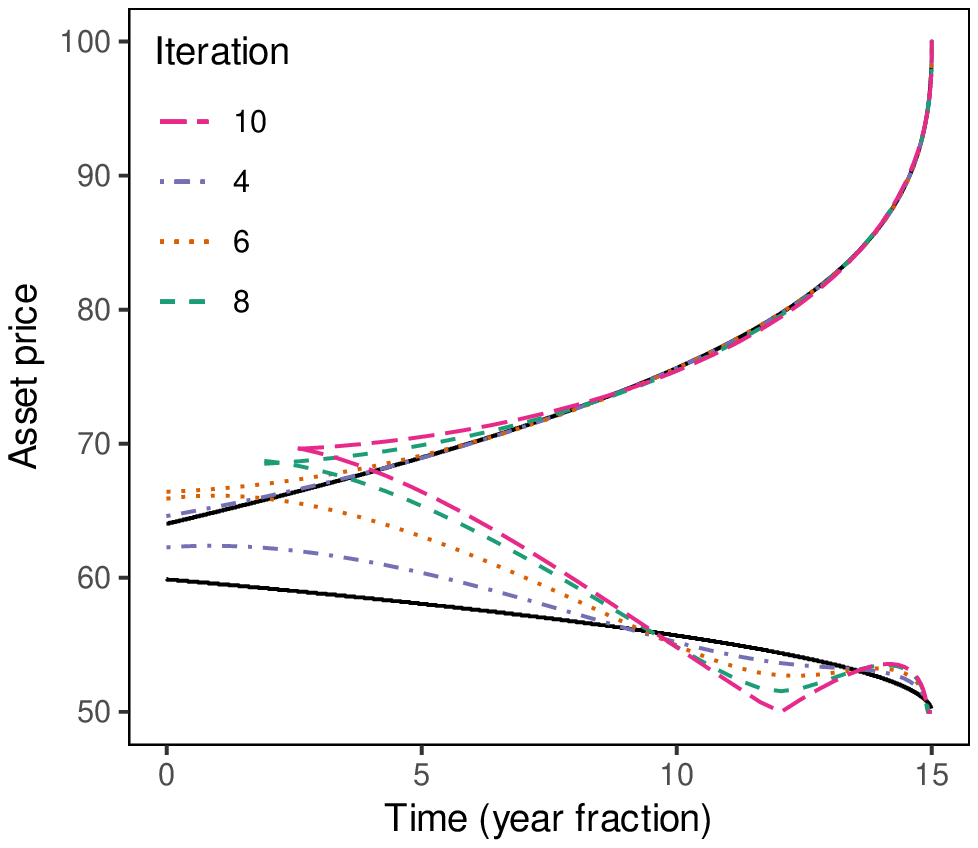}}
	\subfigure[\label{fig:exercise_boundary_put_negative15y_8_alfpb2}FP-B']{
		\includegraphics[width=0.48\textwidth]{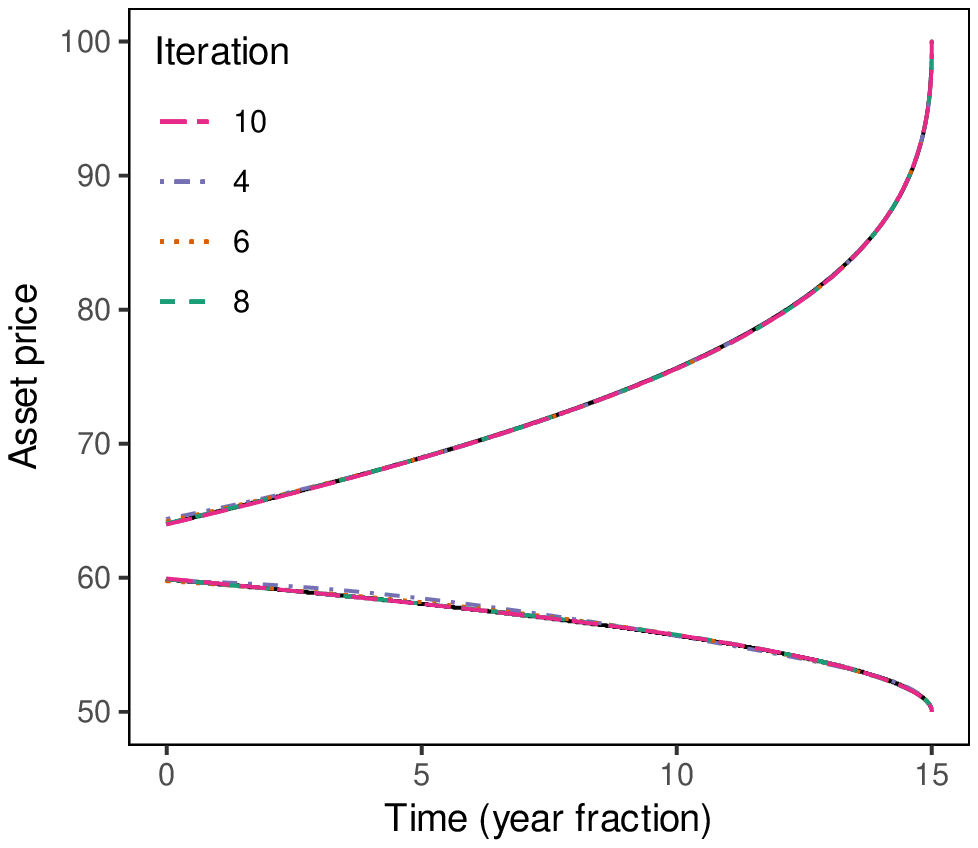} }
	\caption{Early exercise boundary approximations of an American put option with $r=-0.5\%, q=-1\%$, $\sigma=8\%$, $T=15$ using the fixed point methods FP-B and FP-B', with $m=5$ collocation points in the parameterization of \citet{andersen2016high}. The solid lines in black represent the reference exercise boundaries computed by the TR-BDF2 finite difference method on a dense grid.\label{fig:exercise_boundary_put_negative15y_8_alfpball}}
\end{figure} 

With negative rates, the system is not symmetric anymore between the integral and non-integral terms, and this may be the root cause of the instabilities. In the calculation of the upper boundary, the lower boundary seems to have a negligible impact. In the calculation of the lower boundary, the integrals for the lower and upper boundaries contribute both significantly to the outcome, thus making the lack of symmetry more problematic. If, instead, we solve the lower boundary according to the fixed point iteration FP-B' 
\begin{equation}
\begin{cases}
\bm{u}^{j}_i = K \frac{N(t_i,\bm{u}^{j-1},\bm{l}^{j-1})}{D(t_i,\bm{u}^{j-1},\bm{l}^{j-1})}\,,\\
\bm{l}^{j}_i = K \frac{N'(t_i,\bm{l}^{j-1},\bm{u}^{j})}{D'(t_i,\bm{l}^{j-1},\bm{u}^{j})}\,, 
\end{cases}\quad \textmd{ for } i=0,...,m-1\,, \label{eqn:fpbp_update}
\end{equation}
with
\begin{align}
N'(t_i,\bm{B}^u,\bm{B}^l) =& 1-e^{-r(T-t_i)} \Phi(-d_2(B^u_i,K,T-t_i))\nonumber\\
&- \int_{\max(t_i,t_s)}^T r e^{-r(t-t_i)} \left[\Phi(-d_2(B^u_i,B^u_t,t-t_i)) -\Phi(-d_2(B^u_i,B^l_t,t-t_i))\right] dt\nonumber\\
&+ \frac{B^u_i}{K}\int_{\max(t_i,t_s)}^T q e^{-q(t-t_i)} \left[\Phi(-d_1(B^u_i,B^u_t,t-t_i)) -\Phi(-d_1(B^u_i,B^l_t,t-t_i)) \right]dt\,,\\
D'(t_i,\bm{B}^u,\bm{B}^l) =& 1 -  e^{-q(T-t_i)} \Phi(-d_1(B^u_i,K,T-t_i))\,,
\end{align} 
then stability was restored for all the cases we tested. Figure \ref{fig:exercise_boundary_put_negative15y_8_alfpb2} shows the absence of oscillation and convergence for one of those cases. In Equation \ref{eqn:fpbp_update}, the lower boundary is updated from the latest value obtained for the upper boundary. This is not strictly necessary for the specific example of Figure \ref{fig:exercise_boundary_put_negative15y_8_alfpb2} but we found that, with this choice, convergence was increased and stability as well on other more extreme cases. If we apply a similar update to the FP-B algorithm it would still not converge on our example.

Figure \ref{fig:exercise_boundary_put_negative15y_8_asl} shows an example where  the lower boundary does not yet cross the upper boundary. It corresponds to an American put option of long maturity $T=15$ and a volatility $\sigma=8\%$ under negative rate $r=-0.5\%$ and dividend yield $q=-1\%$. The exercise boundaries obtained using the fixed point method FP-B', using $m=5$, are extremely close to the to our reference exercise boundaries computed by the TR-BDF2 finite difference method on a dense grid. When the boundaries cross, such as with a larger volatility $\sigma=15\%$, the crossing-point is reasonably close to the reference TR-BDF2 crossing point (Figure \ref{fig:exercise_boundary_put_negative5y_15_asl}). 



\begin{figure}[h]
	\subfigure[\label{fig:exercise_boundary_put_negative15y_8_asl}$T=15, \sigma=8\%$]{
		\includegraphics[width=0.48\textwidth]{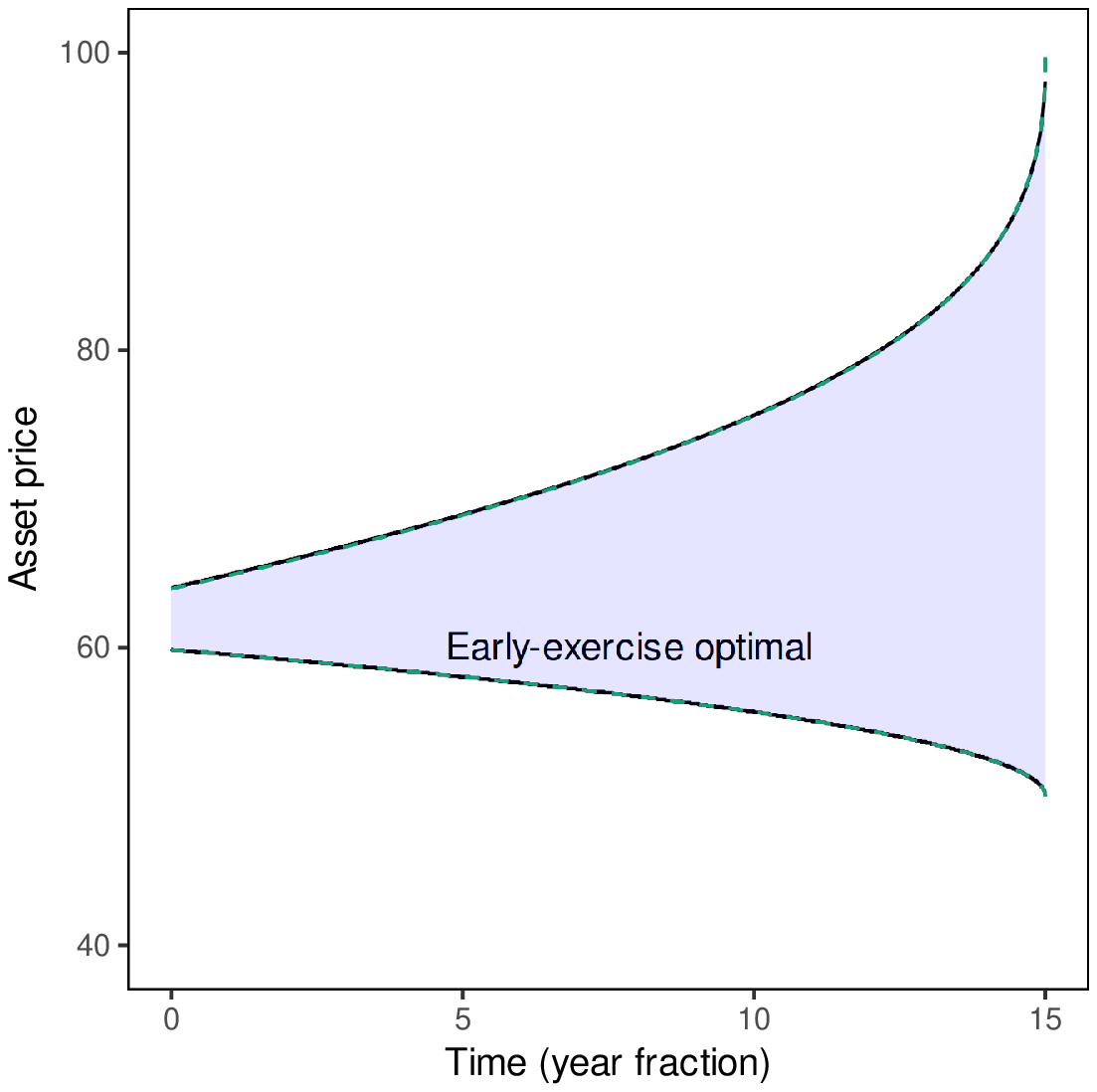}}
	\subfigure[\label{fig:exercise_boundary_put_negative5y_15_asl}$T=5, \sigma=15\%$]{
		\includegraphics[width=0.48\textwidth]{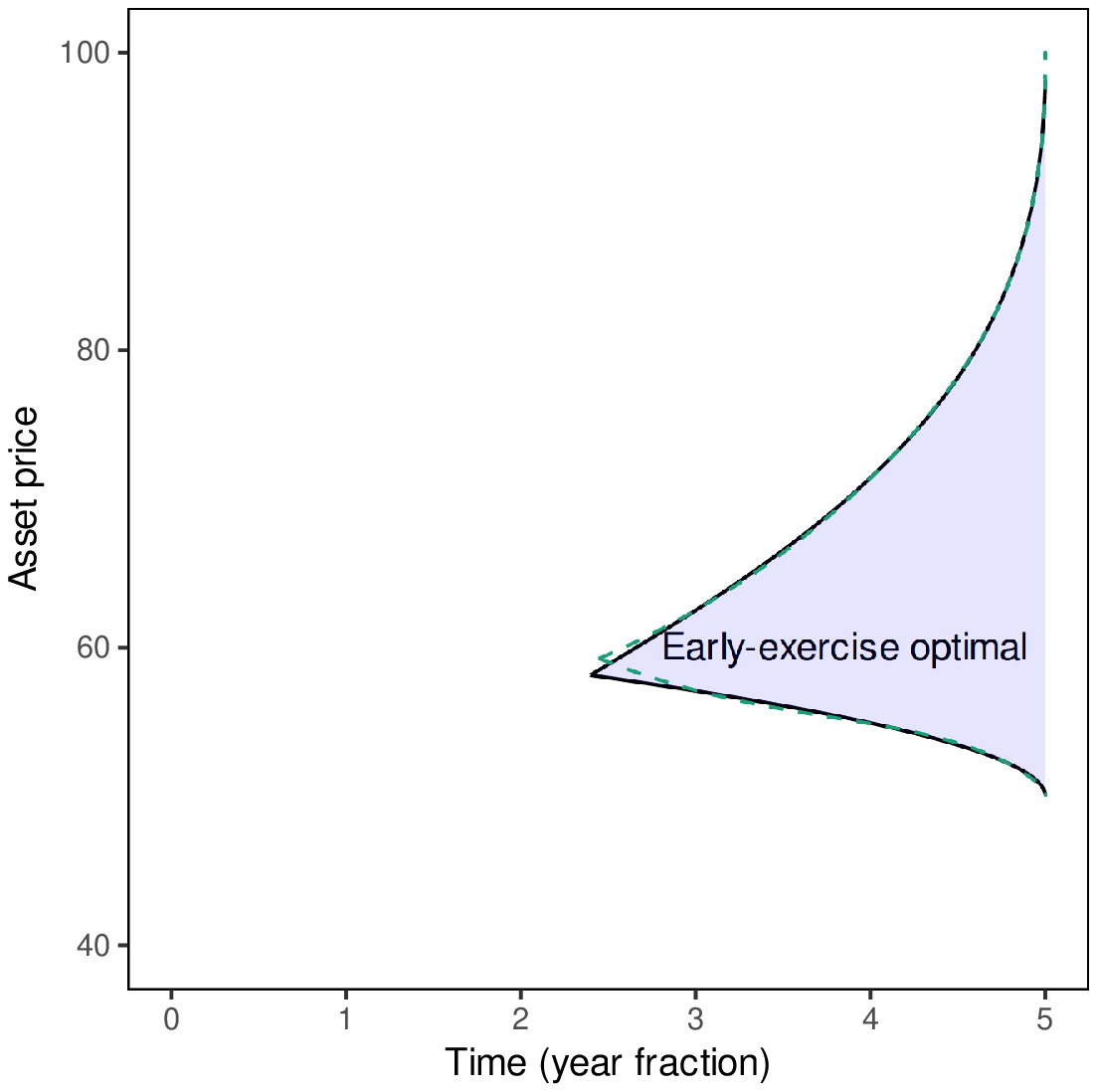} }
	\caption{Early exercise boundary approximations of an American put option with $r=-0.5\%, q=-1\%$, different Black-Scholes volatilities $\sigma$ and time to maturity $T$. The dotted line corresponds to the solution of the fixed point method FP-B' with $m=5$.\label{fig:exercise_boundary_put_negative_asl}}
\end{figure} 



\subsection{Choice of numerical technique}
We start by assessing the performance of our implementation of the FP-B algorithm on positive rates, and compare it to the performance of the TR-BDF2 finite difference scheme (Table \ref{tbl:al_summary}).
The TR-BDF2 finite difference method is applied on a grid of $m$ time-steps (discretized in a uniform square root manner) and $10m$ space-steps (discretized with a hyperbolic transformation to concentrate points around the strike \citep{oosterlee2005accurate}), with a Brennan-Schwartz solver. Other finite difference schemes examined did not offer a better accuracy over performance profile. 
 We reused the test set of  \citet{andersen2016high}. In total, 4495 American put options are priced. 
\begin{table}[h]
	\caption{Model and contract parameter ranges for timing and precision tests from \citet{andersen2016high}. Options with a price smaller than 0.5 are removed from the set, which is then of size 4495\label{tbl:al_range}.}
	\centering{
		\begin{tabular}{lr}\toprule
			Parameter & Range \\	\midrule
			$r$ & \{ 2\%, 4\%, 6\%, 8\%, 10\% \}	\\
			$q$ &  \{0\%, 4\%, 8\%, 12\% \}\\
			$S$ & \{25, 50, 80, 90, 100, 110, 120, 150, 175, 200\} \\
			$T$ & \{1/12, 0.25, 0.5, 0.75, 1.0\} \\
			$\sigma$ & \{0.1, 0.2, 0.3, 0.4, 0.5, 0.6\} \\\bottomrule
	\end{tabular}}
\end{table}

\begin{table}[h]
	\caption{Accuracy and performance of various methods to compute the American put option price on the set of option and model parameters of Table \ref{tbl:al_range}. $m$ is the number of knots, $n$ is the number of fixed-point iterations, $l$ is the first quadrature number of points, $p$ is the second quadrature number of points. The two last columns indicate the number of options priced per second, either individually, or processing the set of asset spots together.\label{tbl:al_summary} }
	\centering{
		\begin{tabular}{lrrrrr}\toprule
			Method  &  RMSE  & MAE & RRMSE & Options/s & Options/s (batch)\\	\midrule
Andersen-Lake $m=5$, $n=4$, $l=11$, $p=21$ & 4.1E-5 & 6.8E-4 & 1.6E-4 & 39040 & 179705 \\
Andersen-Lake $m=7$, $n=8$, $l=15$, $p=31$ & 4.9E-6 & 8.1E-5 & 2.9E-5 & 12507 & 70881\\
TR-BDF2 $m=20$ & 7.1E-4 & 4.9E-3 & 1.9E-3 & 4708 & 36375 \\
TR-BDF2 $m=40$ & 1.8E-4 & 1.1E-3 & 5.9E-4 & 1330 & 10010 \\
			\bottomrule
	\end{tabular}}
\end{table}
We indicate the number of options processed per second, pricing each option individually, or the set of 10 asset spot prices $\{25, 50, 80, 90, 100, 110, 120, 150, 175, 200\}$ together, for each interest rate, dividend yield, maturity, and volatility. Indeed, the exercise boundary is independent of the asset spot price $S$ and may thus be calculated only once for given model parameters. We did not use any caching of the exercise boundary across different option maturities however.
The results confirm the superiority of the technique of \citet{andersen2016high}. 

We now consider the model and contract parameters for negative rates, given in Table \ref{tbl:al_range_neg} and use as reference price the TR-BDF2 scheme with $m=400$ time-steps and the $10m$ steps in the asset space, using the policy iteration solver of \citet{reisinger2012use}.
\begin{table}[h]
	\caption{Model and contract parameter ranges for timing and precision tests for negative rates. Options with a price smaller than 0.5 are removed from the set\label{tbl:al_range_neg}. Two subsets of maturities are considered, short $T \leq 1.0$ and long $T \geq 5.0$.}
	\centering{
		\begin{tabular}{lr}\toprule
			Parameter & Range \\	\midrule
			$r$ & \{ -0.5\%, -1\%, -2\%, -4\% \}	\\
			$q$ &  \{-1\%, -2\%, -3\%, -5\% \} and $q < r$\\
			$S$ & \{25, 50, 80, 90, 100, 110, 120, 150, 175, 200\} \\
			$\sigma$ & \{0.1, 0.2, 0.3, 0.4, 0.5, 0.6\} \\
			$T$ & \emph{short}: \{1/12, 0.25, 0.5, 0.75, 1.0\}, \emph{long}: \{5.0,10.0,15.0\}\\\bottomrule
	\end{tabular}}
\end{table}

\begin{table}[h]
	\caption{Accuracy and performance of various methods to compute the American put option price on the set of option and model parameters corresponding to the \emph{short} maturities of Table \ref{tbl:al_range_neg}. $m$ is the number of knots, $n$ is the number of fixed-point iterations, $l$ is the first quadrature number of points, $p$ is the second quadrature number of points. The last column indicates the number of options priced per second. GN-B is the Gauss-Newton algorithm, with numerical Jacobian, applied to Equations \ref{eqn:kim_integral_b_l} and \ref{eqn:kim_integral_b_u}.\label{tbl:al_summary_neg} }
	\centering{
		\begin{tabular}{llrrrrr}\toprule
			\multicolumn{2}{c}{Andersen-Lake settings}  &  RMSE  & MAE & RRMSE &  &Options/s (batch)\\	\midrule
			 $m=5$, $l=11$, $p=21$ &FP-B' $n=4$ & 6.1E-5 & 1.6E-3 & 5.7E-5 & & 95280 \\
			 & FP-B'  $n=8$ & 2.4E-5 & 6.8E-4 & 2.2E-5 & & 66150 \\
			 & GN-B & 1.8E-5 & 2.6E-4 & 5.5E-5 &  & 37830\\
			 $m=7$, $l=15$, $p=31$ &FP-B' $n=8$ & 2.1E-5 & 6.8E-4 & 9.0E-6 &  & 38030\\
			 &FP-B' $n=16$ & 6.2E-6 & 1.4E-4 & 7.1E-6 &  & 21450\\
			 &GN-B & 1.4E-5 & 2.1E-4 & 2.1E-5 &  & 17200\\
			TR-BDF2, $m=40$ & & 2.0E-4 & 8.4E-4 & 4.1E-4 & & 8130 \\
			\bottomrule
	\end{tabular}}
\end{table}
Overall, the FP-B' method with $m=5, l=11, p=21, n=4$ is more than ten times faster than the finite difference solver, and achieves lower relative error measures (Table \ref{tbl:al_summary_neg}). The number of iterations $n$ needs however to be raised for longer maturities, to keep an acceptable accuracy (Table \ref{tbl:al_summary_neg_long}). In practice, the algorithm would benefit from a relative error stopping criteria, instead of a fixed number of iterations. In this paper, we keep a fixed number of iterations in line with \citet{andersen2016high}. Even with a numerical Jacobian, the Gauss-Newton solver is almost five times faster than the TR-BDF2 finite difference scheme with greater accuracy.

\begin{table}[h]
	\caption{Accuracy and performance of various methods to compute the American put option price on the set of option and model parameters corresponding to the \emph{long} maturities of Table \ref{tbl:al_range_neg}. $m$ is the number of knots, $n$ is the number of fixed-point iterations, $l$ is the first quadrature number of points, $p$ is the second quadrature number of points. \label{tbl:al_summary_neg_long} }
	\centering{
		\begin{tabular}{llrrrrr}\toprule
			\multicolumn{2}{c}{Andersen-Lake settings}  &  RMSE  & MAE & RRMSE &  &Options/s (batch)\\	\midrule
			$m=5$, $l=11$, $p=21$ &FP-B' $n=4$ & 1.4E-3 & 4.2E-2 & 7.1E-4 & & 69600 \\
			& FP-B'  $n=8$ & 7.6E-4 & 2.0E-2 & 3.6E-4 & & 45720 \\
			& FP-B'  $n=16$ & 4.2E-4 & 8.0E-3 & 4.0E-4 & & 30200 \\
			$m=7$, $l=15$, $p=31$ &FP-B' $n=8$ & 4.1E-4 & 1.1E-2 & 1.5E-4 &  & 27700\\
			&FP-B' $n=16$ & 1.4E-4 & 3.3E-3 & 5.9E-5 &  & 18100\\
			TR-BDF2, $m=40$ & & 3.3E-3 & 3.4E-2 & 4.6E-4 & & 8560 \\
			\bottomrule
	\end{tabular}}
\end{table}

Finite difference methods thus do not look competitive here. However, their main interest is the ability to naturally incorporate term-structures of interest rates, dividends, volatilities, or support an alternative model such as the Dupire local volatility model \citep{dupire1994pricing}, as well as the ability to price more complex contracts. They do not require any change to handle negative interest rates, but the techniques based on the integral equation of \citet{kim1990analytic} do.

While it is also not so difficult to support a term-structure of interest, dividend and volatilities in the approach of \citet{andersen2016high}, more knots will be necessary to capture the changes of the various parameters with time. As a consequence their technique may then become less advantageous compared to a finite difference method.

Even though the FP-B' algorithm performed well on our test cases, the reliance of the algorithm on an estimate of $t_s$  may make it fragile on some corner cases, when the boundaries cross. A pragmatic strategy may then be to price American options with the FP-B' algorithm, in the most common case, when the boundaries do not cross, and use the TR-BDF2 scheme otherwise.

\section{Conclusion}
In this paper, we defined the criteria where the early-exercise  of an American option is never optimal under negative rates. We also derived the integral equation, which establishes the option price, and the two early exercise boundaries, under negative rates. Then, we adapted the algorithm of \citet{andersen2016high} to handle negative rates, from the initial guess of the two boundaries to more subtle changes required in their fixed point method for stability. Finally, we showed that the resulting algorithm is up to ten times faster than a cutting edge finite difference solver for the problem of pricing American options under negative rates in the Black-Scholes model.




\funding{This research received no external funding.}
\conflictsofinterest{The authors declare no conflict of interest.}
\externalbibliography{yes}
\bibliography{negative_american}

\begin{thebibliography}{-------}
\providecommand{\natexlab}[1]{#1}

\bibitem[Barone-Adesi and Whaley(1987)]{barone1987efficient}
Barone-Adesi, G.; Whaley, R.E.
\newblock Efficient analytic approximation of American option values.
\newblock {\em The Journal of Finance} {\bf 1987}, {\em 42},~301--320.

\bibitem[Blokland(2017)]{blokland2017american}
Blokland, T.C.
\newblock How American option exercise strategies are affected by negative
  interest rates.
\newblock {\em Faculty of Economics and Business} {\bf 2017}.

\bibitem[McDonald and Schroder(1998)]{mcdonald1998parity}
McDonald, R.; Schroder, M.
\newblock A parity result for American options.
\newblock {\em Journal of Computational Finance} {\bf 1998}, {\em 1},~5--13.

\bibitem[Ju and Zhong(1999)]{ju1999approximate}
Ju, N.; Zhong, R.
\newblock An approximate formula for pricing American options.
\newblock {\em The Journal of Derivatives} {\bf 1999}, {\em 7},~31--40.

\bibitem[Li(2010)]{li2010analytical}
Li, M.
\newblock Analytical approximations for the critical stock prices of American
  options: a performance comparison.
\newblock {\em Review of Derivatives Research} {\bf 2010}, {\em 13},~75--99.

\bibitem[Andersen \em{et~al.}(2016)Andersen, Lake, and
  Offengenden]{andersen2016high}
Andersen, L.B.; Lake, M.; Offengenden, D.
\newblock High-performance American option pricing.
\newblock {\em Journal of Computational Finance} {\bf 2016}, {\em 20},~39--87.

\bibitem[Battauz \em{et~al.}(2015)Battauz, De~Donno, and
  Sbuelz]{battauz2015real}
Battauz, A.; De~Donno, M.; Sbuelz, A.
\newblock Real options and American derivatives: The double continuation
  region.
\newblock {\em Management Science} {\bf 2015}, {\em 61},~1094--1107.

\bibitem[Broadie and Detemple(1996)]{broadie1996american}
Broadie, M.; Detemple, J.
\newblock American option valuation: new bounds, approximations, and a
  comparison of existing methods.
\newblock {\em The Review of Financial Studies} {\bf 1996}, {\em
  9},~1211--1250.

\bibitem[Kim(1990)]{kim1990analytic}
Kim, I.J.
\newblock The analytic valuation of American options.
\newblock {\em The Review of Financial Studies} {\bf 1990}, {\em 3},~547--572.

\bibitem[Ju(1998)]{ju1998pricing}
Ju, N.
\newblock Pricing by American option by approximating its early exercise
  boundary as a multipiece exponential function.
\newblock {\em The Review of Financial Studies} {\bf 1998}, {\em 11},~627--646.

\bibitem[Aitsahlia and Lai(2001)]{aitsahlia2001exercise}
Aitsahlia, F.; Lai, T.L.
\newblock Exercise boundaries and efficient approximations to American option
  prices and hedge parameters.
\newblock {\em Journal of Computational Finance} {\bf 2001}, {\em 4},~85--104.

\bibitem[Kallast and Kivinukk(2003)]{kallast2003pricing}
Kallast, S.; Kivinukk, A.
\newblock Pricing and hedging American options using approximations by Kim
  integral equations.
\newblock {\em Review of Finance} {\bf 2003}, {\em 7},~361--383.

\bibitem[Ren-Raw and Scott(1993)]{ren1993pricing}
Ren-Raw, C.; Scott, L.
\newblock Pricing interest rate futures options with futures-style margining.
\newblock {\em The Journal of Futures Markets (1986-1998)} {\bf 1993}, {\em
  13},~15.

\bibitem[Henrard(2012)]{henrard2012in}
Henrard, M.
\newblock Interest rate futures and their options: some pricing approaches.
\newblock {\em OpenGamma Quantitative Research} {\bf 2012}.

\bibitem[{Le Floc'h}(2014)]{lefloch2014tr}
{Le Floc'h}, F.
\newblock TR-BDF2 for fast stable American option pricing.
\newblock {\em Journal of Computational Finance} {\bf 2014}, {\em 17},~31--56.

\bibitem[Reisinger and Witte(2012)]{reisinger2012use}
Reisinger, C.; Witte, J.H.
\newblock On the use of policy iteration as an easy way of pricing American
  options.
\newblock {\em SIAM Journal on Financial Mathematics} {\bf 2012}, {\em
  3},~459--478.

\bibitem[Chen and Chadam(2007)]{chen2007mathematical}
Chen, X.; Chadam, J.
\newblock A mathematical analysis of the optimal exercise boundary for American
  put options.
\newblock {\em SIAM Journal on Mathematical Analysis} {\bf 2007}, {\em
  38},~1613--1641.

\bibitem[Staunton(2016)]{staunton2016charm}
Staunton, M.
\newblock Charm School for Americans.
\newblock {\em Wilmott} {\bf 2016}, {\em 2016},~38--39.

\bibitem[Amat \em{et~al.}(2003)Amat, Busquier, and
  Guti{\'e}rrez]{amat2003geometric}
Amat, S.; Busquier, S.; Guti{\'e}rrez, J.
\newblock Geometric constructions of iterative functions to solve nonlinear
  equations.
\newblock {\em Journal of Computational and Applied Mathematics} {\bf 2003},
  {\em 157},~197--205.

\bibitem[Amat and Busquier(2007)]{amat2007third}
Amat, S.; Busquier, S.
\newblock Third-order iterative methods under Kantorovich conditions.
\newblock {\em Journal of Mathematical Analysis and Applications} {\bf 2007},
  {\em 336},~243--261.

\bibitem[Kim \em{et~al.}(2013)Kim, Jang, and Kim]{kim2013simple}
Kim, I.J.; Jang, B.G.; Kim, K.T.
\newblock A simple iterative method for the valuation of American options.
\newblock {\em Quantitative Finance} {\bf 2013}, {\em 13},~885--895.

\bibitem[Klare and Miller(2013)]{klare2013gn}
Klare, K.; Miller, G.
\newblock GN--a Simple and Effective Nonlinear Least-Squares Algorithm for the
  Open Source Literature. {\bf 2013}.

\bibitem[Oosterlee \em{et~al.}(2005)Oosterlee, Leentvaar, and
  Huang]{oosterlee2005accurate}
Oosterlee, C.W.; Leentvaar, C.C.; Huang, X.
\newblock Accurate American option pricing by grid stretching and high order
  finite differences.
\newblock {\em Delft University of Technology, The Netherlands, Technical
  Report} {\bf 2005}.

\bibitem[Dupire(1994)]{dupire1994pricing}
Dupire, B.
\newblock Pricing with a smile.
\newblock {\em Risk} {\bf 1994}, {\em 7},~18--20.

\bibitem[Merton(1973)]{merton1973theory}
Merton, R.C.
\newblock Theory of rational option pricing.
\newblock {\em The Bell Journal of economics and management science} {\bf
  1973}, pp. 141--183.

\bibitem[Peskir(2005)]{peskir2005change}
Peskir, G.
\newblock A change-of-variable formula with local time on curves.
\newblock {\em Journal of Theoretical Probability} {\bf 2005}, {\em
  18},~499--535.

\bibitem[Peskir and Shiryaev(2006)]{peskir2006optimal}
Peskir, G.; Shiryaev, A.
\newblock {\em Optimal stopping and free-boundary problems}; Springer,  2006.

\bibitem[Rubinstein(1991)]{rubinstein1991exotic}
Rubinstein, M.
\newblock Exotic options.
\newblock Technical report, University of California at Berkeley,  1991.

\bibitem[Haug(2006)]{haug2006option}
Haug, E.G.
\newblock {\em The Complete Guide to Option Pricing Formulas, Second Edition};
  McGrew-Hill,  2006.

\bibitem[{Le Floc'h} and Pr{\"u}ll(2014)]{lefloch2014barrier}
{Le Floc'h}, F.; Pr{\"u}ll, A.
\newblock Barrier Options Under Negative Rates in Black-Scholes.
\newblock {\em Available at SSRN http://ssrn.com/abstract=2501907} {\bf 2014}.

\end{thebibliography}
\appendixtitles{no}
\appendix

\section{Proof of Equation \ref{eqn:kim_negative}}\label{sec:kim_neg_proof}
Let $\mathcal{L}$ be the Black-Scholes operator defined by 
\begin{equation*}
\mathcal{L}{V}(S,t) =
\frac{1}{2}\sigma^2 S^2 \frac{\partial^2 V}{\partial S^2} + (r-q)S\frac{\partial V}{\partial S} - r V + \frac{\partial V}{\partial t}
\end{equation*}
for a function $V$ of $S$ and $t$.

Under negative rates, the price $V$ of a  American put is the solution of the following free-boundary problem \citep{battauz2015real}
\begin{equation}
\mathcal{L} V(S,t) = 0\quad \textmd{ for } (S,t) \in C\,,\label{eqn:am_pde}
\end{equation}
with initial condition 
\begin{align}
\lim\limits_{t \to T} V(S, t) &= G(S) = \max\left(0,K-S\right)\,, \label{eqn:am_call_payoff}
\end{align}
and boundary conditions
\begin{align}
V(S,t) &= G(S) \quad \textmd{ for } S=u(t)\,,\label{eqn:am_ub}\\
V(S,t) &= G(S) \quad \textmd{ for } S=l(t)\,,\label{eqn:am_lb}\\
\frac{\partial V}{\partial S} &= -1\quad \textmd{ for } S=u(t)\,, \label{eqn:am_ub_high}\\
\frac{\partial V}{\partial S} &= -1\quad \textmd{ for } S=l(t)\,, \label{eqn:am_lb_high}\\
V(S,t) &> G(S) \quad \textmd{ in } C\,,\\
V(S,t) &= G(S) \quad \textmd{ in } D\,,\label{eqn:am_d}
\end{align}
where $T$ is the time to the option maturity, and
\begin{align}
C &= \left\{ (S,t) \in (0,\infty)\times[0,T] : l(t) < S < u(t) \right\} \,,\\
D &= \left\{ (S,t) \in (0,\infty)\times[0,T] : S < l(t) \right\} \cup \left\{ (S,t) \in (0,\infty)\times[0,T] : S > u(t) \right\} \,.
\end{align}
Equation \ref{eqn:am_call_payoff} specifies the payoff of a call at expiration given that the call has not been exercised early. The boundary conditions  \ref{eqn:am_ub} and \ref{eqn:am_lb}  specify the payoff of the call at the time of exercise. The conditions \ref{eqn:am_ub_high} and \ref{eqn:am_lb_high}, known as high-contact condition (or smooth-pasting, smooth-fit condition) ensure the optimality of the exercise boundary \citep{merton1973theory}.

We know that $V$ is continuous on $\mathbb{R}_{+}\times [0,T]$, of class $\mathcal{C}^{1,2}$ on $C$ and  on $D$, and the stochastic process $S = (S_t)_{t \geq 0}$ is a continuous semi-martingale, we can thus apply the extended change-of-variable formula (Remark 2.3) and Theorem 3.1 of \citet{peskir2005change} to $F(t,S) = e^{-r t}V(S)$ to obtain\footnote{This derivation is similar to the one of \citet[Section 25.2]{peskir2006optimal} in the context of  positive interest rates.}
\begin{align*}
e^{-rt}V(S,t+s) =& V(S,t) + \int_{0}^{s} e^{-r z} \mathcal{L}V\left(S_{t+z},t+z\right) 1_{S_{t+z} \neq u(t+z)} 1_{S_{t+z} \neq l(t+z)} dz + M_s\\& + \frac{1}{2}\int_0^s e^{-r z} \left[\frac{\partial V}{\partial S}(u(t+z)^+,t+z) - \frac{\partial V}{\partial S}(u(t+z)^-,t+z) \right]d\ell_z^u(S)  \\&+ \frac{1}{2}\int_0^s e^{-r z} \left[\frac{\partial V}{\partial S}(l(t+z)^+,t+z)- \frac{\partial V}{\partial S}(l(t+z)^-,t+z)\right]d\ell_z^l(S)\,,
\end{align*}
where $M_s = \int_0^s e^{-ru}\frac{\partial V}{\partial S}(u(t+z),t+z)\sigma S_{t+z}
 1_{S_{t+z} \neq u(t+z)} 1_{S_{t+z} \neq l(t+z)} dB_z$ is a martingale under the risk-neutral measure $\mathcal{Q}$ and $\ell_z^u(S)$ is the local time of $S$ at the curve $u$. In our case, the last two integrals in $d\ell_z^l$ and $d\ell_z^u$ are zero, because of the high contact conditions \ref{eqn:am_ub_high} and \ref{eqn:am_lb_high}.
 
 Setting $s=T, t=0$, taking the expectation in the risk-neutral measure, using Equation \ref{eqn:am_call_payoff}, and that $\mathcal{L}V = 0$ in $C$, we get
 \begin{align*}
 e^{-rT}\mathbb{E}_{\mathcal{Q}}\left[ G(S_T) \right] =& V(S,0) + \int_0^{T} e^{-rz}\mathbb{E}_{\mathcal{Q}}\left[ H(S_{z}, z) 1_{l(z) \leq S_{z} \leq u(z)} \right]dz\,,
 \end{align*}
for all $(S,t) \in (0, \infty)\times[0,T)$, where $H(S,t) = \mathcal{L}G(S,t) = -r K + q S$. 
We finally obtain 
\begin{align*}
V(S,0)  =V_{E}(S,0) &+ \int_{0}^T \left[ r K e^{-r t} \Phi(-d_2(S,u(t),t)) - q S e^{-q t} \Phi(-d_1(S,u(t),t)) \right]dt \nonumber\\
&- \int_{0}^T \left[r K e^{-r t} \Phi(-d_2(S,l(t),t)) - q S e^{-q t} \Phi(-d_1(S,l(t),t)) \right]dt
\end{align*}
where $V_E$ is the price of a European option of maturity $T$ at time $t=0$.

Remark 2.3 of \citet{peskir2005change} is only valid in the case of non-intersecting boundaries. In order to extend the result to the general case, we may consider the intersecting time $t_s$ such that $u(t_s)= l(t_s)$. The change-of-variable will be valid on $(t_s,T)$. And on $(0,t_s)$, the optimal exercise region is empty (and thus $\mathcal{L}V = 0$ for all $S > 0$). We end up with Equation \ref{eqn:kim_negative}.

\section{The upper-bound/lower-bound algorithm for the exercise boundary}\label{sec:lower_bound_boundary}
The idea of \citet{broadie1996american} is to approximate the price of  American vanilla options by (American) cap options. A cap call option is nothing else but a standard up-and-out call barrier option with rebate equal to $L-K$, where $L$ is the barrier level and $L$ is the option strike \citep{rubinstein1991exotic}. Similarly, a cap put option is a down-and-out put barrier option with rebate equal to $K-L$.

Indeed, a cap call option can be seen as a policy of exercising an American option as soon as the asset price reaches the constant cap $L$. And thus, the price $V_{\textsf{cap call}}(S,L)$ of a cap call option of strike $K$ on an asset of spot price $S$ is a lower-bound on the value of the American call option of strike $K$ with the same maturity date, and on the same asset.
Following this idea, \citet{broadie1996american} show that a lower bound of the early-exercise boundary is 
given by solving the equation
\begin{equation}
\lim\limits_{S \to L^-} \frac{\partial V_{\textsf{cap call}}(S,L)}{\partial L} = 0\,.\label{eqn:call_lowerbound}
\end{equation}
The price of a cap option is known analytically under the Black-Scholes model \citep{rubinstein1991exotic,broadie1996american,haug2006option}, and thus, $\frac{\partial V_{\textsf{cap call}}(S,L)}{\partial L}$ and its derivative are known analytically\footnote{They are of similar complexity as the Black-Scholes formula for European options.}. The equation above can be solved by Newton's method. If we are looking for the exercise boundary at $n$ increasing points $t_1,...,t_n$ between $t=0$ and $t=T$, we start at $t_n$ using $L=K \max \left( 1,\frac{r}{q} \right)$ as initial guess, and then use the solution at $t_n$ as initial guess for $t_{n-1}$, until $t=t_1$. This results in a lower estimate of the exercise boundary of a call option.

Similarly, an upper estimate of the exercise boundary of an American put option is given by
\begin{equation}
\lim\limits_{S \to L^+} \frac{\partial V_{\textsf{cap put}}(S,L)}{\partial L} = 0\,.\label{eqn:put_upperbound}
\end{equation}
Close to the maturity $T$, $L=K \min \left( 1,\frac{r}{q} \right)$ constitutes a good initial guess.

Starting with $L=K$ close to $t=T$, Equations \ref{eqn:call_lowerbound} and \ref{eqn:put_upperbound} stay valid under negative rates and provide an estimate of the upper boundary of respectively a call and a put option. The only potential issue is to make sure that the formula for barrier options handles negative rates properly: this involve using a complex cumulative normal distribution function, or equivalently, the complex complementary error function for the term corresponding to the rebate \citep{lefloch2014barrier}.

In addition, under negative rates, by following a similar logic as \citet{broadie1996american}, a lower estimate of the lower exercise boundary for a put option is given by 
\begin{equation}
\lim\limits_{S \to L^-} \frac{\partial V_{\textsf{floor put}}(S,L)}{\partial L} = 0\,,\label{eqn:put_lowerbound}
\end{equation}
where $V_{\textsf{floor put}}$ is the price of an up-and-out put barrier option with rebate $K-L$. An upper estimate of the call option lower exercise boundary is given by the price of down-and-out call barrier option with rebate $L-K$.

\end{document}